\documentclass[11pt, letterpaper]{article}
\usepackage{fullpage}
\usepackage{amsthm}
\usepackage{amsmath,amssymb,amsfonts,nicefrac}
\usepackage{xspace}
\usepackage{color}
\usepackage{url}
\usepackage{hyperref}
\usepackage{bm}
\usepackage{bbm}
\usepackage{times}
\hypersetup{hidelinks}

\usepackage{enumitem}

\newtheorem{thm}{Theorem}[section]

\newtheorem{lemma}[thm]{Lemma}

\newtheorem{claim}[thm]{Claim}

\newtheorem{definition}[thm]{Definition}
\newtheorem{remark}[thm]{Remark}

\newcommand\card[1]{\left| {#1} \right|}
\newcommand\sett[2]{\left\{ \left. #1 \;\right\vert #2 \right\}}

\newcommand\Prob[2]{{\Pr_{#1}\left[ {#2} \right]}}

\newcommand\Expect[2]{{\mathop{\mathbb{E}}_{#1}\left[ {#2} \right]}}

\newcommand\norm[1]{\| #1 \|}

\newcommand\eps{\varepsilon}

\renewcommand\geq{\geqslant}
\renewcommand\leq{\leqslant}

\newcommand{\indeg[1]}{{\sf in}\text{-}{\sf deg}({#1})}
\newcommand{\outdeg[1]}{{\sf out}\text{-}{\sf deg}({#1})}
\newcommand{\rom}[1]{\uppercase\expandafter{\romannumeral #1\relax}}

\title {Improved Multilayered PCPs and Hypergraph Vertex Cover\vspace{0.3cm}}
\author{ 
 \textit{Communicated by}\\[0.6em]
    \begin{tabular}{cc}
    Karthik C.S.\thanks{Department of Computer Science, Rutgers University. Supported by NSF CCF award 2313372 and NSF CAREER award 2443697}
    &
    \hspace{2em}
    Dor Minzer\thanks{Department of Mathematics, Massachusetts Institute of Technology. Supported by NSF CCF award 2227876 and NSF CAREER award 2239160.}
    \end{tabular}
    }
\date{\vspace{-5ex}}

\begin{document}
\maketitle
\begin{abstract}
 We present two elementary constructions of multilayered PCPs that improve upon prior constructions in two ways. Specifically, we give one construction of quasi-linear size, and another one with $2$-to-$2$ constraints. Using these constructions we obtain the following results for the hypergraph vertex cover problem:
 \begin{enumerate}
     \item For $k=3$, for all $\eps>0$, approximating the minimum vertex cover of a given $3$-uniform hypergraph within factor $1+\sqrt{2}-\eps$ is NP-hard. Previously, the best known result due to [Dinur, Guruswami, Khot, Regev, SICOMP 2005] achieved a factor of $2-\eps$.
     \item For $k\geq 4$, for all $\eps>0$, approximating the minimum vertex cover of a given $k$-uniform hypergraph within factor $k-\eps$ is NP-hard, which is tight. Previous works established this result  assuming the Unique-Games Conjecture [Khot, Regev, JCSS 2008], and a weaker factor of $k-1-\eps$ for standard NP-hardness [Dinur, Guruswami, Khot, Regev, SICOMP 2005].
     \item Assuming the Exponential Time Hypothesis, for all $k\geq 3$ and $\eps>0$ there is $C>0$ such that no $2^{n/\log^C n}$-time algorithm approximates the minimum vertex cover in a $k$-uniform, $n$-vertex hypergraph within factor $k-1-\eps$.
 \end{enumerate}
 The proofs were obtained using ChatGPT 5.6 Pro and subsequently rewritten by the communicators.
\end{abstract}
\section{Introduction}
To make progress on the NP hardness of hypergraph vertex cover, the work~\cite{DGKR} introduced the notion of multilayered PCPs. Multilayered PCPs have since become
an important ingredient in many subsequent PCP constructions~\cite{engebretsen2008more,KhotSaket,guruswami2014super,GL,guruswami2015inapproximability,bansal2017tight,guruswami2018strong,ABP,brandts2021complexity}. The main goal of this work is to present two multilayered PCP constructions which are stronger than the construction of~\cite{DGKR} in some ways (but weaker in other ways), and show some of their applications.

\subsection{Label Cover}
To formally state our results, we will use the label cover problem defined as follows.
\begin{definition}
An instance $\Psi$ of label cover  consists of a bipartite graph $G = (L\cup R, E)$, alphabets $\Sigma_L$, $\Sigma_R$ and a collection of constraints $\{\Phi_e \subseteq \Sigma_L\times \Sigma_R~|~e\in E\}$. 

A constraint $\Phi_e$ is called a projection constraint if there is a map $\phi_e\colon \Sigma_L\to\Sigma_R$ such that $\Phi_e = \{(\sigma,\phi_e(\sigma))~|~\sigma\in\Sigma_L\}$. If all of the constraints of $\Psi$ are projection constraints, we say that $\Psi$ is a projection label cover instance.

For $d\in\mathbb{N}$, a constraint $\Phi_e$ is called $d$-to-$1$ if it is a projection constraint and for all $\tau\in \Sigma_R$, we have that $|\phi_{e}^{-1}(\tau)|=d$. If all of the constraints of $\Psi$ are $d$-to-$1$ constraints, we say that $\Psi$ is a $d$-to-$1$ game. 

A constraint $\Phi_e$ is called $d$-to-$d$ if there are partitions $\Sigma_L = P_1\cup\ldots\cup P_m$, $\Sigma_R=Q_1\cup\ldots \cup Q_m$ such that
\begin{enumerate}
    \item $\card{P_i}=\card{Q_i}=d$ for all $i=1,\ldots,m$.
    \item $\Phi_{e} = \bigcup_{i=1}^{m}P_i\times Q_i$.
\end{enumerate}
If all of the constraints of $\Psi$ are $d$-to-$d$ constraints, we say that $\Psi$ is a $d$-to-$d$ game. 
\end{definition}
The size of $\Psi$ is defined as $|L|+|R|+|E|$, the alphabet size of $\Psi$ is defined as $|\Sigma_L|+|\Sigma_R|$, and we next define the value of $\Psi$. For an edge $e=(u,v)\in E$, we say that the labels $\sigma$ for $u$ and $\tau$ for $v$ satisfy the constraint on $e$ if $(\sigma,\tau)\in \Phi_e$. For labelings $A_{L}\colon L\to\Sigma_L$ and $A_R\colon R\to\Sigma_R$, we define 
\[
\mathsf{val}(A_L,A_R) = \frac{|\{e=(u,v)\in E~|~(A_L(u),A_R(v))\in \Phi_e\}|}{|E|},
\]
and $\mathsf{val}(\Psi) = \max_{A_L,A_R}\mathsf{val}(A_L,A_R)$. We can now define multilayered label cover.
\begin{definition}
For an integer $\ell\in\mathbb{N}$, an $\ell$-layered PCP consists of sets of vertices $X_1,\ldots,X_{\ell}$ with alphabets $\Sigma_1,\ldots,\Sigma_{\ell}$ respectively, as well as a label cover instance $\Psi_{i,j} = ((X_i\cup X_j,E_{i,j}),\Sigma_i,\Sigma_j,\Phi_{i,j})$ for each $i<j$. 
\end{definition}
We have the following two definitions of expansion for multi-layered label cover.
\begin{definition}
    Fix $\lambda\in (0,1]$, $c>0$ and let $\Psi = (\Psi_{i,j})_{1\leq i<j\leq \ell}$ be an $\ell$-layered label cover instance. 
    \begin{enumerate}
        \item \textbf{Weakly dense:} We say that $\Psi$ is $(c,\lambda)$-weakly dense if for any collection of layers $I\subseteq [\ell]$ of size at least $2/\lambda$,  and sets $A_i\subseteq X_i$ of density at least $\lambda$ for $i\in I$, there are $i<j$ in $I$ such that 
        \[
        \Prob{(u,v)\in E_{i,j}}{u\in A_i, v\in A_j}\geq c\lambda^2.
        \]
        \item \textbf{Strongly dense:} We say that $\Psi$ is $\lambda$-strongly dense if for all $i<j$, the normalized adjacency operator of the label cover instance $\Psi_{i,j}$ has second singular value at most $\lambda$. 
    \end{enumerate}
\end{definition}
It is easy to see, using the expander mixing lemma, that the property of strong density implies the property of weak density (with slightly worse parameters). Also, all known applications of multilayered PCPs use weak density. We nevertheless include the notion of ``strongly dense'' as it may be useful elsewhere.

\subsection{Main Results}\label{sec:main_results}
We first present our multilayered PCPs, and we begin with the one that is based on the $2$-to-$1$ Games theorem~\cite{KMS1,DKKMS1,DKKMS2,KMS2}. The result reads as follows:
\begin{thm}\label{thm:multilayered_222}
    For all $\lambda\in (0,1]$, $\eps>0$ and $\ell\in\mathbb{N}$ there exists $C>0$ such that the following holds. Given an $\ell$-layered, $(1/2, \lambda)$-weakly dense label cover instance $\Psi = \{\Psi_{i,j}\}_{1\leq i<j\leq \ell}$ such that each $\Psi_{i,j}$ is a $2$-to-$2$ game with the same alphabet of size at most $C$, it is NP-hard to distinguish between the following two cases:
    \begin{enumerate}
        \item \textbf{Yes case:} There are sets $X_{i}'\subseteq X_i$ of fractional size at least $1-\eps$, and partial labelings $A_i\colon X_i'\to\Sigma_i$ that satisfy all of the constraints of $\Psi$ on edges between the sets $X_{i}'$. 
        \item \textbf{No case:} For all $i<j$, $\mathsf{val}(\Psi_{i,j})\leq \eps$.
    \end{enumerate}
\end{thm}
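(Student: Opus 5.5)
The plan is a direct reduction from the $2$-to-$2$ Games theorem~\cite{KMS1,DKKMS1,DKKMS2,KMS2}, in the version with imperfect completeness. For every $\delta>0$ there is a constant alphabet $\Sigma$ such that, given a regular $2$-to-$2$ game $H=(V,E_H,\{\Phi_e\})$, it is NP-hard to distinguish $\mathsf{val}(H)\geq 1-\delta$ from $\mathsf{val}(H)\leq\delta$. We may take $H$ bipartite, so that soundness is stated for a pair of labelings.

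\textbf{Construction.} Set $X_i=V^{\ell}$ for every $i\in[\ell]$ and $\Sigma_i=\Sigma$. A vertex $x\in X_i$ is meant to carry a label for its $i$-th coordinate $x_i$. An edge of $E_{i,j}$ is sampled as follows:
\begin{itemize}
\item pick $z\in V^{\ell}$ uniformly and an edge $(a,b)\in E_H$ uniformly;
\item let $x$ equal $z$ except $x_i=a$, and let $y$ equal $z$ except $y_j=b$.
\end{itemize}
Thus $x$ and $y$ agree outside $\{i,j\}$, and $x_j$, $y_i$ are independent uniform vertices. The constraint placed on $(x,y)$ is $\Phi_{(a,b)}$. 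Every $\Psi_{i,j}$ is therefore a $2$-to-$2$ game over the single alphabet $\Sigma$ of constant size. The instance has size $|V|^{O(\ell)}$, which is polynomial for fixed $\ell$.

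\textbf{Completeness.} Take a labeling $A$ of $H$ satisfying a $1-\delta$ fraction of edges. Let $S\subseteq V$ be the set of endpoints of unsatisfied edges. By regularity, $S$ has measure at most $2\delta$. Set $X_i'=\{x: x_i\notin S\}$, which has density at least $1-2\delta\geq 1-\eps$, and label $x\in X_i'$ by $A(x_i)$. Any edge between $X_i'$ and $X_j'$ comes from an $H$-edge $(a,b)$ with $a,b\notin S$. Such an edge is satisfied, so every constraint inside the $X_i'$'s holds.

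\textbf{Soundness.} Suppose labelings of $X_i$ and $X_j$ satisfy an $\eps$ fraction of $E_{i,j}$. Condition on the coordinates outside $\{i,j\}$ and on the independent coordinates $x_j$, $y_i$. Under this conditioning, the labeling of $X_i$ becomes a function of $a$ and that of $X_j$ a function of $b$. The pair $(a,b)$ is still a uniform $H$-edge. By averaging, some fixing yields labelings of the two sides of $H$ satisfying an $\eps$ fraction of $E_H$. Choosing $\delta<\eps$ contradicts the NO case.

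\textbf{Weak density (main obstacle).} For sets $A_i\subseteq X_i$ of density $\lambda$ we need some pair with
\[
\Pr[x\in A_i,\,y\in A_j]\geq \lambda^2/2.
\]
Fix the common coordinates $w$. Write $F_i^w(a)=\mathbb{E}_c 1_{A_i}(w,x_i=a,x_j=c)$, and define $G_j^w(b)$ analogously for $A_j$. The probability then equals $\mathbb{E}_w\langle F_i^w,M_H G_j^w\rangle$, where $M_H$ is the normalized adjacency operator of $H$. If $H$ were a good expander this would be about $\lambda^2$ by the expander mixing lemma, so the first thing I would try is to reduce to an expanding $H$. The $2$-to-$2$ hardness instances come from Grassmann-type graphs, and one can compose with a product/expander step while keeping $2$-to-$2$ constraints.

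Failing that, I would run a DGKR-style argument over many layers. Among $2/\lambda$ layers, the functions $F_i^w$, which are indexed by different coordinates, behave essentially independently. The reason is that $a$ is fresh in layer $i$ while the rest of $x$ is shared, so correlations across layers are controlled by Cauchy--Schwarz. A pigeonhole over pairs, using $\sum_i\mathbb{E}[1_{A_i}]\geq 2$, should then force a pair with density at least $\lambda^2/2$; this is where the constant $1/2$ comes from. I expect this density step to be the delicate part of the proof, and the construction may need to be tuned so that it goes through.
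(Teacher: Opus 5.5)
There are two genuine gaps. The more serious one is the weak-density step you flagged yourself.

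\textbf{Weak density.} You do not prove it, and for the $H$ you chose it is false. Take $H$ bipartite on $V=V_1\cup V_2$ and set $A_i=\{x\in V^{\ell}: x_i\in V_2\}$ for every $i$. These sets have density $1/2$. Yet $x\in A_i$ and $y\in A_j$ together would force both endpoints $a=x_i$ and $b=y_j$ of an $H$-edge into $V_2$, which never happens. More generally, $A_i=\{x: x_i\in T\}$ gives $\Pr[x\in A_i,\,y\in A_j]=\langle 1_T, M_H 1_T\rangle$, which is $0$ whenever $T$ is independent in $H$. So no ``near-independence plus Cauchy--Schwarz'' argument can work for an arbitrary $2$-to-$2$ game. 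Your first route also fails: the expander trick of Theorem~\ref{thm:quasilinear_expanding_2} enlarges an alphabet to $\Sigma^D$ and destroys the $2$-to-$2$ structure.

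The missing idea is to let $H$ be the collision game of the $2$-to-$1$ game of Theorem~\ref{thm:221}: sample $r\in R$ and two independent neighbours $a,b$ of $r$. Write $z_{i\leftarrow a}$ for $z$ with its $i$-th coordinate replaced by $a$, and set $\phi_i(z,r)=\Pr_{a\sim N(r)}[z_{i\leftarrow a}\in A_i]$. Then your own construction satisfies $\Pr[x\in A_i,\,y\in A_j]=\mathbb{E}_{z,r}[\phi_i\phi_j]$, which is a Gram matrix. Hence $\sum_{i\neq j}\mathbb{E}[\phi_i\phi_j]\geq s^2-s\geq s^2/2$ with $s=\sum_{i\in I}\mu(A_i)\geq |I|\lambda\geq 2$, and some pair has density at least $\lambda^2/2$.

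The collision game is also sound against pairs of labelings (decode $A_R(v)$ from a random neighbour), which your averaging step needs. Bipartiteness gives you that pair-soundness but kills density; the collision game has both. Once you use it, the product $V^{\ell}$ is unnecessary. The paper takes $X_i=L\times\{i\}$ with the collision game between every two layers, and density reduces to $\mathbb{E}_v[(\mathbb{E}_{i\in I,\,u\sim N(v)}1_{A_i}(u))^2]\geq\lambda^2$ minus a diagonal contribution of at most $\lambda^2/2$.

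\textbf{Completeness.} The endpoints of violated edges have measure up to $\delta d$, not $2\delta$, where $d$ is the degree; this is attained when the violated edges form a matching. Any $2$-to-$2$ game of degree $d$ has value at least $1/(2d-1)$, since you can satisfy a greedy matching. So soundness $\delta$ forces $\delta d\geq 1/2$, and value $1-\delta$ alone cannot give the vertex-wise Yes case. You need the stronger completeness in Theorem~\ref{thm:221}: a set $L'$ of fractional size $1-\eps$ and labelings satisfying every constraint between $L'$ and $R$. The paper notes this is not the form stated in the $2$-to-$2$ papers, but their construction achieves it. With it, any collision edge between $u,u'\in L'$ through $v$ is satisfied, because both labels project to $A_R(v)$.
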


Our second multilayered PCP is based on the quasi-linear size PCP of~\cite{BMV}. The result reads as follows:
\begin{thm}\label{thm:multilayered_expand}
    For all $\lambda\in (0,1]$, $\eps>0$ and $\ell\in\mathbb{N}$ there exists $C>0$ such that the following holds. There is a polynomial time reduction that given a $3$-SAT formula $\varphi$ of size $n$, produces an $\ell$-layered, $\lambda$-strongly dense, projection label cover instance $\Psi = \{\Psi_{i,j}\}_{1\leq i<j\leq \ell}$ of size $n\log^C n$, alphabet sizes at most $C$ such that:
    \begin{enumerate}
        \item \textbf{Yes case:} If $\varphi$ is satisfiable, then there are labelings $A_i\colon X_i\to\Sigma_i$ that satisfy all of the constraints of $\Psi$.
        \item \textbf{No case:} If $\varphi$ is unsatisfiable, then for all $i<j$, $\mathsf{val}(\Psi_{i,j})\leq \eps$.
    \end{enumerate}
\end{thm}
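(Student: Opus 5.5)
Starting from the quasi-linear size PCP of~\cite{BMV}, I would produce a label cover instance $\Psi_0$ on a bipartite graph $(U\cup V,E_0)$ with projection constraints $\phi_e\colon\Sigma_U\to\Sigma_V$. It should have size $n\,\mathrm{polylog}(n)$, constant alphabet, perfect completeness and soundness $\delta$ (with $\delta$ chosen small as a function of $\eps,\ell,\lambda$). I would first regularize it: make the graph biregular, and, if needed, tensor it with a constant-degree expander on the $U$ side so that the ``$U$-$U$ via $V$'' graph is a good spectral expander. All of this costs only a constant or polylog factor in size. Soundness is preserved up to constants, since projection games tolerate such degree-balancing and expander-replacement steps. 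I would also apply a constant number of rounds of parallel repetition, or simply rely on soundness amplification already present in~\cite{BMV}, to bring the soundness below $\delta$.

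The layered instance is built by the standard~\cite{DGKR} path construction, with a twist. Take layer $X_i$ to be the set of paths $(v_0,u_1,\dots)$ of a fixed shape in which the first $\ell-i$ steps are ``$U$-type'' and the remaining $i$ are ``$V$-type''. Concretely, a vertex of $X_i$ is a tuple $(w_1,\dots,w_\ell)$ of walk steps in the expanded game, where coordinates $1..i$ are $V$-vertices and coordinates $i+1..\ell$ are $U$-vertices adjacent along the walk. The label of a vertex in $X_i$ is the tuple of labels of its $U$-coordinates. For $i<j$, $X_j$ is obtained from $X_i$ by projecting coordinates $i+1..j$ from $U$ to $V$, and this gives projection constraints. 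To get \emph{strong} density rather than the weak density of~\cite{DGKR}, I would not use independent tuples, which create the quadratic blowup $|X_i|\sim |U|^{\ell}$. Instead I would use walks of length $\ell$ on a constant-degree expander structure over $U\cup V$, so that $|X_i|\le n\,\mathrm{polylog}(n)\cdot D^{\ell}$, which is still quasi-linear. The edge set $E_{i,j}$ consists of pairs of walks agreeing outside coordinates $i+1..j$. The normalized operator of $\Psi_{i,j}$ is then a product of walk operators, and its second singular value is bounded by $\mu^{j-i}$ or similar for the expander's spectral gap $\mu$. I would take the expander's degree large enough that $\mu\le\lambda$.

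Completeness is immediate: a satisfying assignment of $\Psi_0$ induces coordinatewise labelings satisfying every constraint.

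Soundness is the main obstacle. Fix $i<j$ and suppose labelings of $X_i,X_j$ satisfy an $\eps$ fraction of $E_{i,j}$. I would decode by fixing all coordinates except one position $t\in\{i+1,\dots,j\}$, chosen at random. An averaging argument then gives a strategy for the base game between $U$ and $V$, using the label's $t$-th coordinate. The catch is that with walk-correlated coordinates, conditioning on the other coordinates does not give a uniformly random edge of $\Psi_0$. I would handle this by choosing the walk so that, conditioned on neighbors, the middle edge is distributed as a random edge of $\Psi_0$ restricted to a neighborhood. Then, using the expander mixing lemma, the induced distribution on edges is close to uniform. If the base game had value $\le\delta$, the decoded strategy would succeed with probability $\ge\eps^2/\mathrm{poly}(\ell)-O(\mu)$, which exceeds $\delta$ once $\delta$ and $\mu$ are small, a contradiction. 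If this conditioning argument fails, the fallback is to keep the~\cite{DGKR} independent-tuple structure only on a constant number of coordinates, and to derandomize the rest with an expander so that the size stays quasi-linear.
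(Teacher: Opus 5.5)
Your first step matches the paper: composing the \cite{BMV} instance with a constant-degree expander so that the $U$--$U$-via-$V$ graph expands. The gap is in the layered construction you build on top of it, which fails both on density and on soundness.

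In any DGKR-style construction, an edge of $E_{i,j}$ joins tuples that agree on the coordinates outside $\{i+1,\dots,j\}$, and these shared coordinates always include $w_1,\dots,w_i$. So the values of the shared coordinates are constant on connected components of $(X_i\cup X_j,E_{i,j})$. A nonconstant, mean-zero function of the shared coordinates on $X_i$ is mapped by the normalized operator to the same function on $X_j$. Hence the second singular value is $1$, not $\mu^{j-i}$. This disconnection is exactly why \cite{DGKR} only obtain weak density, and correlating the coordinates along a walk does not remove it.

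Worse, your walk lives in a constant-degree graph, so fixing the shared coordinates confines the hidden coordinates to a constant-radius ball. Thus $\Psi_{i,j}$ is a disjoint union of constant-size games, one per context. Its value is the weighted average of their values, each computable by brute force, and in the Yes case all of them equal $1$. So soundness $\eps$ would give a polynomial-time algorithm for $3$-SAT. For instance, when $j<\ell$ the $X_j$ side sees $v_j$ and $u_{j+1}$, and $u_j$ must be adjacent to both, which for a generic expander pins it down. Iterating, the $X_j$ side reconstructs the $X_i$ vertex and simply projects its label.

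The expander mixing lemma cannot rescue this. The middle edge may well be nearly uniform on average, but labels may depend on the entire context. The fallback fails too: independent tuples on even two coordinates already cost $|U|^2$ in size. The same objection applies to a constant number of rounds of ordinary parallel repetition, which you do not need anyway, since \cite{BMV} gives any constant soundness directly.

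The missing idea is to avoid shared coordinates altogether. The paper takes $\ell$ copies of the left side $L$ of the expanding instance and places the collision game $u\to v\to u'$ between every two copies. Its operator is $\mathrm{T}^*\mathrm{T}$, whose second singular value is at most $\lambda^2$. Soundness follows from the usual randomized decoding, which gives $v$ the projected label of a random neighbour.

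The collision game is not a projection game, and the paper repairs this by enlarging alphabets. The vertex $(u,i)$ is labelled by a locally consistent assignment to the ball $N_{\leq \ell-i}(u)$ of the collision graph. The constraint with $(v,j)$ requires agreement on $N_{\leq \ell-j}(v)$. That ball lies inside $N_{\leq \ell-i}(u)$ because $v$ is adjacent to $u$ and $\ell-j+1\leq \ell-i$, so the constraint is a projection. Constant degrees keep the alphabets of constant size, and the whole instance is only a constant factor larger than the base one.
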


\subsubsection{Implications for Hypergraph Vertex Cover}
The most interesting implications of our multilayered PCPs are to the hypergraph vertex cover problem. Given a $k$-uniform hypergraph $\mathcal{H} = (\mathcal{V},\mathcal{E})$, the goal in the vertex cover problem is to find the smallest set of vertices $S\subseteq \mathcal{V}$ that touches all edges of $\mathcal{H}$. A simple polynomial time approximation algorithm achieves approximation ratio $k$, and it is known to be optimal assuming the Unique-Games Conjecture~\cite{KhotRegev}. 
For $k=2$, the best known NP-hardness result, due to~\cite{KMS1,DKKMS1,DKKMS2,KMS2}, currently stands at factor $\sqrt{2}-\eps$ for all $\eps>0$. For $k\geq 3$, the best known NP-hardness result, due to~\cite{DGKR}, stands at factor $k-1-\eps$ for all $\eps>0$. The result below improves the NP-hardness factor for all $k\geq 3$, and is optimal for all $k\geq 4$.
\begin{thm}\label{thm:hypergraph_VC_improved_factor}
    Let $k\geq 3$ be an integer.
    \begin{enumerate}
        \item For $k=3$, for all $\eps>0$, given a $3$-uniform weighted hypergraph $\mathcal{H}$, it is NP-hard to approximate the minimum weight of a vertex cover in $\mathcal{H}$ within factor $1+\sqrt{2}-\eps$.
        \item For $k\geq 4$, for all $\eps>0$, given a $k$-uniform weighted hypergraph $\mathcal{H}$, it is NP-hard to approximate the minimum weight of a vertex cover in $\mathcal{H}$ within factor $k-\eps$.
    \end{enumerate}
    \end{thm}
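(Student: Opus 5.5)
Both parts follow the DGKR template: start from the multilayered PCP of Theorem~\ref{thm:multilayered_222}, which has perfect-on-most completeness and $2$-to-$2$ constraints. Replace each vertex $x\in X_i$ of layer $i$ by a block of hypercube vertices $\{0,1\}^{\Sigma_i}$, weighted by the $p$-biased measure $\mu_p$. The total weight is split evenly among the $\ell$ layers, and $\ell$ is taken large in terms of $k$ and $\eps$. The two cases use different hyperedge structures:
\begin{enumerate}
\item For $k\geq 4$, use the DGKR layered $k$-wise intersecting construction. Hyperedges are $k$-tuples of subsets $F_1,\ldots,F_{k-1}$ from a block $x\in X_i$ and $G$ from a block $y\in X_j$ (or split $a$ and $k-a$ across the two blocks). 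They are placed across constraint edges $(x,y)\in E_{i,j}$. A tuple is a hyperedge when the sets, pulled back through the $2$-to-$2$ partition structure, have empty common intersection.
\item For $k=3$, use a $3$-uniform analogue of the $\sqrt{2}$ vertex cover gadget: two sets from one side and one from the other, with a Grassmann/partition-compatible intersection condition.
\end{enumerate}
In both cases choose $p = 1-1/(k-1)-\delta$ for $k\geq 4$, and $p$ tuned to yield $1+\sqrt 2$ for $k=3$.

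\textbf{Completeness.} Given the partial labelings $A_i$ on $X_i'$, take the independent set consisting of the $p$-biased dictators $\{F: A_i(x)\in F\}$ in each block with $x\in X_i'$. Its complement is a vertex cover of weight roughly $1-p+\eps$ fraction per block. This exploits the fact that $2$-to-$2$ constraints map a label to a pair that contains the partner label. The $\eps$-fraction of unlabeled vertices are put entirely into the cover, costing only $O(\eps)$. This gives a cover of weight about $1/(k-1)$ when $k\geq4$, and the analogous bound for $k=3$.

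\textbf{Soundness.} Suppose there is a vertex cover of weight at most $1-\eps'$. Then its complement $I$ is an independent set of weight at least $\eps'$. By averaging, in at least a $\lambda$-fraction of layers a $\lambda$-fraction of blocks have $\mu_p(I_x)\geq \eps'/2$. Weak density then yields layers $i<j$ and a $c\lambda^2$ fraction of edges $(x,y)\in E_{i,j}$ where both $I_x$ and $I_y$ are large.

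On such an edge, independence means the families are $k$-wise cross-intersecting, after accounting for the $2$-to-$2$ structure. Since $p<1-1/(k-1)$, a Friedgut-type junta theorem / Russo–Margulis argument shows each such family is correlated with a small junta $J_x$. This is the stability result that a large $(k-1)$-wise intersecting $\mu_p$ family has few influential coordinates. Cross-intersection then forces $J_x$ and $J_y$ to be consistent under $\Phi_{x,y}$, meaning some pair of labels are compatible. Randomly decoding from the juntas satisfies an $\Omega_{\lambda,\eps}(1)$ fraction of $\Psi_{i,j}$, contradicting $\mathsf{val}\leq\eps$. The factor is then (cover in the no case)/(cover in the yes case), which approaches $k-1$ times the improvement coming from the gadget.

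\textbf{Main obstacle.} The hard part is actually \emph{beating} DGKR's $k-1$. The improvement must come from the $2$-to-$2$ structure allowing a smarter completeness: an independent set of measure $p$ with $p$ up to $1-1/k$ rather than $1-1/(k-1)$. The natural candidate is to make each block a hypercube over the $m$ pairs of the partition rather than over labels, with a dictator on the pair containing the label. One must then check that $k$-wise intersection still forces label agreement. I would first try setting the hyperedge condition so that $k$ sets from two adjacent blocks form an edge iff they have no common pair-class under the matching $P_t\leftrightarrow Q_t$. In soundness, the junta step then requires $\mu_p$ families with $p<1-1/k$ that are $k$-wise intersecting to have small juntas, which holds by the standard results.

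For $k=3$, the same idea combined with the $\sqrt2$ Grassmann analysis should give $1+\sqrt2$. I would expect the $k=3$ soundness analysis, which needs a Grassmann-type structure theorem rather than plain juntas, to be the genuinely delicate step.
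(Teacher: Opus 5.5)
\textbf{There is a genuine gap.} The proposal never supplies the ingredient that pushes the factor past $k-1$.

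\textbf{What your construction actually gives.} For $k\geq 4$ you fix the hyperedges as $k-1$ sets in one block and one set in the other, with $p=1-\frac{1}{k-1}-\delta$. Your own completeness computation then gives a cover of weight about $\frac{1}{k-1}$, i.e.\ factor $k-1$, which is just DGKR. The fix under ``main obstacle'' does not work as stated. A block cannot be a hypercube over the pairs of ``the'' partition, because the $2$-to-$2$ partition of $\Sigma$ at a vertex depends on the edge: in the collision game it is $\{\phi_{u,v}^{-1}(\tau)\}$ for the particular $v$.

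You also locate the difficulty in the wrong place. The juntas come from Russo--Margulis at a perturbed $p'$ plus Friedgut, with no intersection-stability input. No Grassmann-type theorem is needed for $k=3$; that machinery lives inside the $2$-to-$2$ theorem, which is used as a black box. The step that actually needs a new idea is your sentence ``cross-intersection then forces $J_x$ and $J_y$ to be consistent.'' You give no mechanism for it once $p>1-\frac{1}{k-1}$.

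\textbf{The missing idea is a coupling.} The paper puts two sets $A_1,A_2$ in the block of $x$ and $k-2$ sets $B_1,\ldots,B_{k-2}$ in the block of $y$. A tuple is a hyperedge whenever $\pi_1(A_1\cap A_2)\cap\pi_2(B_1\cap\cdots\cap B_{k-2})=\emptyset$. Lemmas~\ref{lem:key_coupling_4} and~\ref{lem:key_coupling_3} construct a distribution over such tuples, built independently on each pair $P_t\times Q_t$, in which every set is $p$-biased. This works for all $p\leq 1-\frac{1}{k}$ when $k\geq 4$, and for all $p\leq 2-\sqrt{2}$ when $k=3$. These thresholds are exactly where the factors $k$ and $1/(1-p)=1+\sqrt{2}$ come from.

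\textbf{Why the split matters.} Suppose one side has a single set $G$. Whenever $G$ meets $Q_t$, the other $k-1$ sets must jointly omit both elements of $P_t$. Counting expected omissions forces $(k-1)q+q^2\geq 1$ for $q=1-p$, which fails at $q=\frac{1}{k}$. So for $k\geq 4$ both sides need at least two sets to share the omissions. For $k=3$ the same count is tight exactly at $p=2-\sqrt{2}$.

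\textbf{How the coupling is used in soundness.} It converts junta information into a hyperedge. Suppose the extended juntas contained no consistent pair. Restrict each family to its junta's pair-completion together with the partners of the other side's junta. Those partner coordinates have low influence, so the restricted measures stay at least $1-2\eta$. The remaining coordinates carry a $2$-to-$2$ constraint, so you can sample from the coupling at $p'$. A union bound then yields a hyperedge inside the independent set, a contradiction.
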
 
    Next, we show a result that matches the ratio in~\cite{DGKR}, but implies a near exponential time lower bound assuming ETH. Its proof is the same as the construction in~\cite[Section 5]{DGKR}, except that one uses the multilayered PCP from Theorem~\ref{thm:multilayered_expand} instead of theirs, and we omit the details.
    \begin{thm}\label{thm:hypergraph_ETH}
        Assuming ETH, for all $k\geq 3$ and $\eps>0$, there exists $C>0$ such that no $2^{n/\log^{C} n}$-time algorithm approximates the minimum weight of a vertex cover of a given $k$-uniform, $n$-vertex hypergraph within factor $k-1-\eps$.
    \end{thm}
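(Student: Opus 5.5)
The plan is to follow the reduction of~\cite[Section 5]{DGKR} essentially verbatim, feeding it the multilayered PCP of Theorem~\ref{thm:multilayered_expand} and tracking sizes. Start from a $3$-SAT formula $\varphi$ of size $n$. By the sparsification lemma, ETH rules out $2^{o(n)}$-time algorithms already on instances with $O(n)$ clauses. Fix $k\geq 3$ and $\eps>0$, and choose the DGKR parameters: the number of layers $\ell$, the bias $p$ slightly below $1-\frac{1}{k-1}$, the density parameter $\lambda$, and the soundness $\eps'$. Applying Theorem~\ref{thm:multilayered_expand} with these parameters gives an $\ell$-layered, $\lambda$-strongly dense projection label cover $\Psi$ of size $n\log^{C_0} n$ with alphabet size at most $C_0$, where $C_0$ depends only on $k,\eps$. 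By the remark following the density definitions, the expander mixing lemma upgrades strong density to weak density with slightly worse constants, and weak density is the only expansion property DGKR use. I will take $\lambda$ smaller if needed to absorb this loss.

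Next, build the hypergraph exactly as in DGKR. For each layer $i$ and vertex $x\in X_i$, introduce a block of vertices indexed by subsets of $\Sigma_i$, weighted by the $p$-biased measure $\mu_p$. Since $|\Sigma_i|\leq C_0$, each block has at most $2^{C_0}=O(1)$ vertices. For every edge $(x,y)\in E_{i,j}$, add $k$-uniform hyperedges on $k$ subsets $F_1,\ldots,F_{k-1}$ from the block of $x$ and $G$ from the block of $y$ whenever no $b\in G$ has $\phi^{-1}(b)$ meeting all of the $F_t$. The number of hyperedges per label-cover edge is at most $2^{kC_0}=O(1)$. The hypergraph therefore has $N=O(n\log^{C_0}n)$ vertices and size $O(n\log^{C_0} n)$, and it is produced in time polynomial in its size. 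Weights can be replaced by multiplicities or duplication at a constant-factor blowup, if an unweighted statement is wanted.

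Completeness and soundness are inherited directly from DGKR. In the Yes case, the satisfying labeling gives a cover of weight about $1-p$ per layer, with one layer possibly omitted. In the No case, the argument uses weak density to find two layers $i<j$ with many edges between the large blocks, then the $p$-biased intersecting-family / Erd\H{o}s--Ko--Rado-type analysis (the $(k-1)$-wise intersecting Friedgut-style bound), and decodes a labeling of $\Psi_{i,j}$ with value above $\eps'$. This contradicts the No case of Theorem~\ref{thm:multilayered_expand}. The one point to check is that DGKR's decoding only needs projection constraints, soundness on each pair $\Psi_{i,j}$, and weak density with constants depending on $k,\eps$. All three are provided by Theorem~\ref{thm:multilayered_expand}.

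The ETH bound then follows by a size computation, which I expect to be the main, though mild, obstacle: choosing the exponent so that the polylog factors cancel. Suppose there were a $2^{N/\log^{C}N}$-time $(k-1-\eps)$-approximation with $C=C_0+1$. Since $N\leq n\log^{C_0+o(1)}n$ and $\log N=\Theta(\log n)$, we get $N/\log^{C}N=O(n/\log n)=o(n)$. Composing with the reduction would then decide $3$-SAT in $2^{o(n)}$ time, contradicting ETH.
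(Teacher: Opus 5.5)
Your proposal matches the paper's (omitted) proof: run the construction of \cite[Section 5]{DGKR} on the quasi-linear, strongly dense multilayered PCP of Theorem~\ref{thm:multilayered_expand}, note that the constant alphabet gives a hypergraph of size $O(n\log^{C_0}n)$, and take $C=C_0+1$ so that a $2^{N/\log^{C}N}$-time algorithm becomes a $2^{o(n)}$-time algorithm for sparsified $3$-SAT. One wording point: the hyperedge condition must be $\phi(F_1\cap\cdots\cap F_{k-1})\cap G=\emptyset$ (no single label of $x$ that lies in every $F_t$ projects into $G$), not the reading in which $\phi^{-1}(b)$ merely meets each $F_t$ separately, because that reading gives a much sparser hypergraph on which the $(k-1)$-wise $t$-intersecting decoding fails when projection fibers are large.
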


    Given Theorem~\ref{thm:hypergraph_VC_improved_factor} and Theorem~\ref{thm:hypergraph_ETH}, it is natural to ask if there are sub-exponential time algorithms achieving better than $k-1$ approximation for vertex cover over $k$-uniform hypergraphs. The only avenue we have for such hardness results relies on $d$-to-$1$ games, which inherently incurs polynomial blow-up in the reduction~\cite{ABS,Steurer2010Dto1}. Thus, it seems unlikely that one could improve upon the factor in Theorem~\ref{thm:hypergraph_ETH} using current techniques.

    \begin{remark}
     We remark that Theorems~\ref{thm:hypergraph_VC_improved_factor} and~\ref{thm:hypergraph_ETH} can be converted into a hardness result for standard, unweighted hypergraphs by vertex duplications. Moreover, the results hold for bounded degree hypergraphs.
    \end{remark}
    
    \subsection{Other Implications}\label{sec:other_implications}
    The results from Section~\ref{sec:main_results} lead to  improved hardness of approximation results, and we mention a few of them below. 
    \paragraph{Simple hypergraphs:} A $k$-uniform hypergraph is called simple if any two distinct hyperedges share at most a single vertex. Using the result of~\cite{GS}, we get the following version of Theorem~\ref{thm:hypergraph_VC_improved_factor} for simple hypergraphs.
    \begin{thm}\label{thm:simple_hyp}
        Assume $\mathsf{NP}\not\subseteq\mathsf{BPP}$. Then, for all $\eps>0$, no polynomial time algorithm approximates the size of the smallest vertex cover in a given simple $3$-uniform hypergraph within factor $1+\sqrt{2}-\eps$. For $k\geq 4$, no polynomial time algorithm approximates the size of the smallest vertex cover in a given simple $k$-uniform  hypergraph within factor $k-\eps$.
    \end{thm}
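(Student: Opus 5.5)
The plan is to derive Theorem~\ref{thm:simple_hyp} as a black-box composition of Theorem~\ref{thm:hypergraph_VC_improved_factor} with the randomized transformation of~\cite{GS}. That transformation turns a $k$-uniform hypergraph into a simple $k$-uniform hypergraph while approximately preserving the relative size of the minimum vertex cover. We proceed in three steps.

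\textbf{Step 1: reduce to unweighted, bounded degree instances.} By the remark following Theorem~\ref{thm:hypergraph_ETH}, the gap instances of Theorem~\ref{thm:hypergraph_VC_improved_factor} can be taken unweighted (via vertex duplication) and of bounded degree. Both are needed for the subsequent step. Since ratios are scale invariant, we normalize and state the gap as follows: in the completeness case there is a vertex cover of relative size at most $c$, and in the soundness case every vertex cover has relative size at least $s$. Here $s/c\geq 1+\sqrt{2}-\eps/2$ for $k=3$, and $s/c\geq k-\eps/2$ for $k\geq 4$.

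\textbf{Step 2: apply the randomized transformation of~\cite{GS}.} Each vertex $v$ is replaced by a cloud of $M$ copies. Each hyperedge is replaced by a random sparse family of hyperedges on the product of the clouds of its vertices, chosen so that, with high probability, the resulting hypergraph $\mathcal{H}'$ is simple.

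\textbf{Step 3: transfer both sides of the gap.}
\begin{itemize}
\item \emph{Completeness} is deterministic. If $S$ is a cover of $\mathcal{H}$, then the union of the clouds of $S$ covers $\mathcal{H}'$, and it has the same relative size.
\item \emph{Soundness} is the only real content, and I expect it to be the main obstacle. Given a cover $S'$ of $\mathcal{H}'$ with relative size below $s-\delta$, we want to obtain a cover of $\mathcal{H}$ of relative size below $s$. For each vertex $v$, let $x_v$ be the fraction of its cloud lying in $S'$, and round by a random threshold. A hyperedge $e$ of $\mathcal{H}$ is left uncovered only if every vertex of $e$ has cloud fraction missing from $S'$ bounded away from $0$. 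In that case the product of these missing subsets is a dense set in the cloud product of $e$. The pseudorandom (sampler) property of the~\cite{GS} construction guarantees, with high probability and simultaneously for all such dense product sets, that some hyperedge of $\mathcal{H}'$ falls inside it. That hyperedge would be uncovered by $S'$, a contradiction. The remaining few uncovered hyperedges are fixed by adding one vertex each, which changes the relative size by at most $O(\delta)$; this is where bounded degree is used.
\end{itemize}
The union bound over all product sets is exactly what~\cite{GS} establish, so we invoke their theorem rather than redo it.

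\textbf{Conclusion.} The resulting reduction is randomized and polynomial time, and it preserves the gap up to an additive $\eps/2$. Hence a polynomial time approximation algorithm for simple hypergraphs achieving the stated ratio would place NP in BPP, which is the claimed statement.
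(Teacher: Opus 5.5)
Your proposal is correct and follows the paper's argument: the paper obtains this theorem as a black-box composition of Theorem~\ref{thm:hypergraph_VC_improved_factor}, in its unweighted bounded-degree form from the remark, with the randomized simple-hypergraph transformation of~\cite{GS}, which is why the conclusion is stated under $\mathsf{NP}\not\subseteq\mathsf{BPP}$. Your sketch of the internals of~\cite{GS} is not needed, and its ``fix the remaining uncovered hyperedges'' step is superfluous: with a rounding threshold bounded away from $1$, the sampler property leaves no hyperedge uncovered. Since you invoke their theorem directly, this does not affect correctness.
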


     \paragraph{Feedback vertex sets for bounded cycles:}  The input is an $n$-vertex graph, and the goal is to find as small a set of vertices as possible that intersects each cycle of size at most $O\left(\frac{\log n}{\log\log n}\right)$. The work~\cite{GL} showed a reduction from the hypergraph vertex cover problem to the minimum feedback set for bounded cycles problem. Using their reduction with Theorem~\ref{thm:hypergraph_VC_improved_factor} gives the following result:
     \begin{thm}\label{thm:feedback}
         Let $\Gamma_3 = \sqrt{2}+1$, $\Gamma_{k}=k$ for $k\geq 4$ and assume that $\mathsf{NP}\not\subseteq\mathsf{BPP}$. Then for all $k\geq 3$, $\eps>0$, there is $c_{k,\eps}>0$ such that there is no polynomial time algorithm that given an $n$-vertex graph $G=(V,E)$ distinguishes between the following two cases:
         \begin{enumerate}
             \item \textbf{Yes case:} There is a set $S\subseteq V$ of fractional size at most $\frac{1}{\Gamma_k}+\eps$ that intersects all cycles of length at most $c_{k,\eps}\frac{\log n}{\log\log n}$ in $G$.
             \item \textbf{No case:} For all $S\subseteq V$ of fractional size $1-\eps$, $G$ has a cycle of length $k$ that does not intersect $S$.
         \end{enumerate}
     \end{thm}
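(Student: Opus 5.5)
The plan is to compose two black boxes: the reduction of~\cite{GL} from $k$-uniform hypergraph vertex cover to feedback vertex set on bounded cycles, applied to the hard instances behind Theorem~\ref{thm:hypergraph_VC_improved_factor} after they are made simple via~\cite{GS}, exactly as in Theorem~\ref{thm:simple_hyp}. Concretely, I would first restate Theorem~\ref{thm:hypergraph_VC_improved_factor} as a gap problem. Unpacking the hypergraph construction built on Theorem~\ref{thm:multilayered_222} (for $k=3$) and on the $k\geq 4$ variant, the dichotomy is as follows. In the yes case there is a vertex cover of weight at most $\frac{1}{\Gamma_k}+\eps$. In the no case every set of weight $1-\eps$ contains a hyperedge, equivalently every vertex cover has weight at least $\eps$ in the complementary sense; the ratio $\Gamma_k$ comes from this ``$1/\Gamma_k$ versus $1$'' shape. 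I would record this gap form explicitly, since~\cite{GL} needs a gap with this structure rather than merely a ratio. Next I would remove weights by vertex duplication (as in the Remark) and use the bounded-degree property. Then I would apply the randomized reduction of~\cite{GS} to obtain a simple $k$-uniform hypergraph while preserving both the yes-case cover fraction and the no-case independence bound up to $\eps$. This step is where $\mathsf{NP}\not\subseteq\mathsf{BPP}$ enters.

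The second step feeds the simple hypergraph into~\cite{GL}. Their reduction replaces each hyperedge by a $k$-cycle through (copies of) its vertices inside a graph of large girth outside these designated cycles. Every cycle of length at most $c_{k,\eps}\frac{\log n}{\log\log n}$ is then one of the designated $k$-cycles, or else it traverses vertices that together cover a hyperedge. Simplicity of the hypergraph is what prevents short spurious cycles formed by two hyperedges sharing two vertices. In the yes case, the image of the small vertex cover hits all designated cycles, and hence all short cycles, giving fractional size at most $\frac{1}{\Gamma_k}+\eps$ after accounting for the normalization in~\cite{GL}. In the no case, any $S$ of fractional size $1-\eps$ in $G$ pulls back to a set of fractional size about $1-\eps'$ in $\mathcal{H}$. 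That set therefore contains a hyperedge, whose designated $k$-cycle avoids $S$. Finally I would choose $\eps'$ in terms of $\eps$ and $k$, and take $c_{k,\eps}$ from~\cite{GL}.

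The main obstacle I expect is verifying that the~\cite{GL} reduction is parameter-preserving in exactly the needed sense. Their theorem is stated for the~\cite{DGKR} gap ($k-1$), so I need to check that it uses only a ``yes: cover of density $\alpha$ / no: every $(1-\eps)$-set contains an edge'' gap together with bounded degree and simplicity, and nothing specific to the DGKR construction. I also need to check that the density normalization between hypergraph vertices and graph vertices does not distort the fraction $\frac{1}{\Gamma_k}$. If the reduction blows up some vertices non-uniformly, I would handle this by first making the hypergraph regular, again through vertex duplication with weights, so that the correspondence of fractional sizes is exact up to $\eps$.
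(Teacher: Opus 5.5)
Your overall route is the paper's: feed the hard instances of Theorem~\ref{thm:hypergraph_VC_improved_factor} (unweighted and bounded degree, per the Remark) into the reduction of~\cite{GL}. The problem is in the soundness, which you both mis-state and misapply. You record the No case as ``every set of weight $1-\eps$ contains a hyperedge,'' and then argue that the preimage of $S$ contains a hyperedge ``whose designated $k$-cycle avoids $S$.'' A hyperedge inside the preimage of $S$ gives a cycle inside $S$, so that last step is a non sequitur.

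The weak property also cannot suffice on its own. A hypergraph with an independent set of weight $1/2$ satisfies it. Yet the vertices of $G$ corresponding to that independent set contain no designated cycle, and hence, given that all short cycles are designated, no $k$-cycle at all. So any $S$ of fractional size $1-\eps$ that contains everything else meets every $k$-cycle, violating the No case. Your gloss ``every vertex cover has weight at least $\eps$'' has the same defect: it is not the ``$1/\Gamma_k$ versus $1$'' shape.

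The fix is to use the strong soundness that Theorem~\ref{thm:hypergraph_IS} actually provides: no independent set has weight exceeding $\eps'$, i.e.\ every vertex cover has weight at least $1-\eps'$. Apply it to the complement rather than to $S$. The set $V\setminus S$ has fractional size $\eps$, so its preimage in $\mathcal{H}$ (defined with whatever averaging over copies~\cite{GL} requires) has weight more than $\eps'$ once $\eps'\ll\eps$. Hence it contains a hyperedge, whose designated $k$-cycle lies in $V\setminus S$.

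Separately, the detour through~\cite{GS} is not part of the paper's argument and is not needed. The paper plugs Theorem~\ref{thm:hypergraph_VC_improved_factor}, which is plain NP-hardness, directly into~\cite{GL}. So the assumption $\mathsf{NP}\not\subseteq\mathsf{BPP}$ is accounted for by~\cite{GL}'s randomized reduction, which also takes care of short cycles. Your stated reason for needing simplicity is also off. Simplicity only excludes Berge cycles of length $2$. The Yes case needs every cycle of length up to $c_{k,\eps}\log n/\log\log n$ to be a designated one, which requires Berge girth of that order.
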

     Previously, the work~\cite{GL} established a version of Theorem~\ref{thm:feedback} under the Unique-Games Conjecture, and gave the weaker value of $\Gamma_k = k-1$ under the more standard $\mathsf{NP}\not\subseteq\mathsf{BPP}$.
     \paragraph{$H$-transversal:} Given a graph $H$ on $k$ vertices (thought of as small), an input to the $H$-transversal problem is a graph $G=(V,E)$ and the goal is the smallest $S\subseteq V$ such that the subgraph induced by $V\setminus S$ does not have $H$ as a subgraph. The work~\cite{GL} showed a reduction from the hypergraph vertex cover problem to the $H$-transversal problem for any $2$-connected graph, i.e., any graph that cannot become disconnected by the removal of a single vertex. Using their reduction with Theorem~\ref{thm:hypergraph_VC_improved_factor} gives the following result:
     \begin{thm}\label{thm:transversal}
         Suppose $\mathsf{NP}\not\subseteq\mathsf{BPP}$. Then, for all $\eps>0$ and for fixed $2$-connected graph $H$ with $k\geq 4$ vertices, no polynomial time algorithm approximates $H$-transversal within factor $k-\eps$.
     \end{thm}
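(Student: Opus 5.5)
The plan is to compose the reduction of~\cite{GL} with Theorem~\ref{thm:hypergraph_VC_improved_factor}, tracking the parameters. The key point is that~\cite{GL} already prove the analogous statement: under the Unique-Games Conjecture with factor $k-\eps$, and under $\mathsf{NP}\not\subseteq\mathsf{BPP}$ with factor $k-1-\eps$. Their reduction only uses a black-box gap instance of $k$-uniform hypergraph vertex cover. Therefore it suffices to feed it the gap instances given by Theorem~\ref{thm:hypergraph_VC_improved_factor}, part 2.

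\textbf{Step 1: the hypergraph gap instance.} For fixed $k\geq 4$ and small $\delta>0$, I would take the $k$-uniform hypergraph instance produced by Theorem~\ref{thm:hypergraph_VC_improved_factor}. After the vertex-duplication and bounded-degree remarks following that theorem, it is unweighted and of bounded degree, and the two cases are:
\begin{itemize}
\item \textbf{Yes case:} a vertex cover of fractional size at most $1/k+\delta$ exists.
\item \textbf{No case:} every set of fractional size $1-\delta$ misses (i.e.\ leaves uncovered) some hyperedge, equivalently every vertex cover has fractional size at least $1-\delta$.
\end{itemize}
The $k-\eps$ gap is the ratio of these two quantities. I would check that the construction indeed gives this strong ``$1/k$ versus $1-\delta$'' form, which is the form~\cite{GL} require. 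Since the construction is the standard~\cite{DGKR}-style reduction from the multilayered PCP of Theorem~\ref{thm:multilayered_222}, such a soundness ``every independent set has density at most $\delta$'' is what the analysis yields.

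\textbf{Step 2: the randomized gadget reduction of~\cite{GL}.} Each hyperedge $\{v_1,\dots,v_k\}$ is replaced by a copy of $H$ on the vertices $v_1,\dots,v_k$, with the identification chosen via a random ordering/ random bijection, possibly after first passing to a high-girth random subhypergraph. The two directions are:
\begin{itemize}
\item \textbf{Completeness:} removing the Yes-case cover kills every copy of $H$ built from hyperedges. One must also argue that no ``spurious'' copies of $H$ arise from unions of several gadgets; this is where 2-connectivity of $H$ and the randomization/girth argument of~\cite{GL} enter.
\item \textbf{Soundness:} if $S$ is an $H$-transversal, then $S$ is in particular a vertex cover of the hypergraph, so $|S|\geq (1-\delta)|V|$.
\end{itemize}
Together these give a gap ratio of $\frac{1-\delta}{1/k+\delta}\geq k-\eps$.

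\textbf{Main obstacle.} The main obstacle is confirming that the~\cite{GL} reduction needs nothing beyond the hypergraph gap itself, for instance no specific structure from~\cite{DGKR} such as bounded degree or a particular vertex-weight distribution. Where it does need such structure, I would verify that our instances have the same properties: bounded degree via the remark, and uniform weights via duplication. I expect this to amount to citing~\cite{GL} as a black box, and the randomness in their reduction accounts for the $\mathsf{BPP}$ assumption.
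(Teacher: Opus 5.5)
Your proposal matches the paper's argument: the paper proves this theorem in one line, by running the reduction of~\cite{GL} on the instances from part~2 of Theorem~\ref{thm:hypergraph_VC_improved_factor}, which is exactly your plan. One point is worth making explicit. If, as you suggest, the reduction first passes to a random high-girth subhypergraph, your soundness step needs the density form of the no case (every set of measure $2\delta$ contains an $\Omega_{\delta}(1)$ fraction of the hyperedges), not just the absence of independent sets of measure $\delta$. This density form does follow from the plain gap for the bounded-degree, uniformly weighted instances you invoke: otherwise, deleting one vertex from each hyperedge inside the set would remove measure at most $\delta$ and leave an independent set of measure at least $\delta$.
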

     Previously, the work~\cite{GL} established a version of Theorem~\ref{thm:transversal} under the Unique-Games Conjecture, and the weaker factor of $k-1-\eps$ under the more standard $\mathsf{NP}\not\subseteq\mathsf{BPP}$. A similar improvement follows for the result of~\cite{LWTree}.
\subsection{Techniques}\label{sec:techniques}
\subsubsection{Multilayered PCPs}
The proofs of Theorems~\ref{thm:multilayered_222} and~\ref{thm:multilayered_expand} rely on the basic observation that the \emph{collision game} of a given projection label cover instance can be naturally viewed as a multilayered PCP. Given  a projection label cover instance $\Psi = ((L\cup R, E), \Sigma_L, \Sigma_R, \{\phi_e\}_{e\in E})$, its collision game is the game on the bipartite graph whose sides are copies of $L$, and its edges are weighted according to the following process:
\begin{enumerate}
    \item Sample $v\in R$ with probability proportional to its degree.
    \item Sample $u,u'$ neighbours of $v$ independently.
    \item Output the edge $(u,u')$.
\end{enumerate}
In principle, an edge $(u,u')$ may be formed by more than a single choice of $v$, and these can be thought of as parallel constraints. The alphabet of each side is $\Sigma_L$, and the constraint on the edge $(u,u')$ is that $\phi_{u,v}(\sigma) = \phi_{u',v}(\sigma')$. In words, the projections of their labels agree on $v$. It is well known that if $\Psi$ is satisfiable, then its collision game is satisfiable, and if $\Psi$ has small value, then its collision game has low value. Our key observation is that one can construct a multilayered PCP by taking $\ell$ copies of $L$, and having between any two of them the collision game of $\Psi$. This immediately gives Theorem~\ref{thm:multilayered_222} from the result of~\cite{KMS1,DKKMS1,DKKMS2,KMS2}.

Theorem~\ref{thm:multilayered_expand} requires a little more work due to two reasons: first, by default the collision game only satisfies the ``weakly dense'' notion, and second, it is not a projection game.
\begin{enumerate}
    \item To address the first issue, we show a black-box reduction that maps an arbitrary constant degree label cover instance (which is known to be hard), to label-cover instances over bipartite expander graphs using the transformation from~\cite{MoshkovitzFortification}. It takes a bipartite expander graph $H = (A\cup B, E)$ of constant degree whose right side is identified with $L$, and then thinks of a vertex $a\in A$ as holding the labels of all of its neighbours in $L$. The edges and constraints naturally correspond to walks from $A$ to $R$ of length $2$.  
\item To address the second issue, we show that if the original label cover has constant degrees, then one can enlarge the alphabets of the resulting collision game and make it into a projection game. This enlargement corresponds to a vertex $v$ holding the labels of all vertices $u$ of certain distance from it in the collision graph, and is similar to the way that the works~\cite{MoshkovitzRaz,DinurHarsha} ``flip'' the direction of projection constraints.
\end{enumerate}

\subsubsection{Hypergraph Vertex Cover}
The proof of Theorem~\ref{thm:hypergraph_VC_improved_factor} uses the $p$-biased long-code machinery of~\cite{DinurSafra,KhotRegev}. Its vertex set is $P([n])$ weighted by the probability distribution $\mu_p(A) = p^{|A|}(1-p)^{n-|A|}$, where the intention is to encode an element $\sigma \in [n]$ (which is a label in some outer label cover instance) via the set of vertices $\{A~|~A\ni\sigma\}$, called the dictatorship of $\sigma$. The edges/hyperedges of the $p$-biased long code are designed with the goal of ensuring that any large independent set corresponds to a few elements, and to check constraints of the outer label cover instance. Clearly, the weight of a dictatorship set is $p$, and to get the best hardness result for vertex cover one wishes to take as large as possible $p$. The main question becomes whether, for a given $p$, one can design hyperedges that ensure that:
\begin{enumerate}
    \item \textbf{Decoding:} Any large independent set can be ``decoded'' into a small set of elements.
    \item \textbf{Constraint checking:} For any constraint $(x,y)$, the small sets of ``decoded elements'' of $x$ and of $y$, contain a pair of labels satisfying the constraint on $(x,y)$.
\end{enumerate}
In~\cite{DinurSafra,KhotRegev} and here, the first of these items is handled by strong results in Boolean functions analysis such as the Friedgut junta theorem and the Russo-Margulis lemma. We thus focus on the second item.

In our case (as well as in the case of~\cite{DinurSafra,DGKR}), since we do not rely on the Unique-Games conjecture, we have to consider $2$ separate long-codes that are linked by a $2$-to-$2$ constraint. Specifically, suppose that $\Sigma_1,\Sigma_2$ are alphabets of the same even size (they can be thought of as the same, but we distinguish them for notational clarity), suppose $\Phi\subseteq \Sigma_1\times \Sigma_2$ is a $2$-to-$2$ constraint, and consider the $p$-biased long-codes $P(\Sigma_1)$, $P(\Sigma_2)$ with $p$ to be determined. We wish to add $k$-uniform hyperedges between these two long codes so as to ensure that
\begin{enumerate}
    \item For any $(\sigma_1,\sigma_2)\in \Phi$, the dictatorships defined by $\sigma_1$, $\sigma_2$ have no $k$-uniform hyperedges.
    \item For any $(\sigma_1,\sigma_2)\not\in \Phi$, the dictatorships defined by $\sigma_1$, $\sigma_2$ do have $k$-uniform hyperedges.
    \item These two properties are robust in the sense that they hold even for families that are close to the aforementioned dictatorships.
\end{enumerate}
Write $k=a+b$ where $a,b\geq 1$ are integers, and consider $(A_1,\ldots,A_a,B_1,\ldots,B_b)$ where the $A$'s come from $P(\Sigma_1)$ and the $B$'s come from $P(\Sigma_2)$. If we wish to include this as a hyperedge, for the first property to hold, we must have that 
$A_1\cap \ldots \cap A_{a}$, $B_1\cap \ldots\cap B_b$ include no pair from $\Phi$. We can afford to include all such hyperedges, and in fact it is easily seen that these hyperedges also achieve the second item. This approach, with $a=1$ and $b=k-1$, was taken by~\cite{DGKR}, who showed a combinatorial analysis that avoided the third item, but only allows $p=1-\frac{1}{k-1}$ (which results in hardness for factors up to $k-1$).

Explicitly addressing the third item amounts to designing a probability distribution over hyperedges $(A_1,\ldots,A_a,B_1,\ldots,B_b)$, in which the marginal distribution of each vertex is the $p$-biased over its respective long code. For example, the work~\cite{KhotRegev} establishes a very simple such coupling for $1$-to-$1$ constraints, where they take $a=1$, $b=k-1$ and $p=1-\frac{1}{k}$. The work~\cite{DinurSafra} established a more elaborate coupling for $1.5$-to-$1.5$ constraints for $k=2$, $a=1$, $b=1$ for $p=(3-\sqrt{5})/2$, and the work~\cite{KMS1} used a similar coupling in the context of $2$-to-$2$ games to get $p=1-1/\sqrt{2}$.

\vspace{-2ex}
\paragraph{Our improvement:} We show that if $\Phi$ is a $2$-to-$2$ constraint and $k\geq 3$, then there is a way to construct couplings for values of $p$ larger than $1-\frac{1}{k-1}$, thereby improving upon~\cite{DGKR}. 
To get some intuition, write $\Phi = \bigcup_{i=1}^{m} P_i\times Q_i$ where $\{P_i\}_{i=1,\ldots,m}$ is a partition of $\Sigma_1$ into sets of size $2$, and $\{Q_i\}_{i=1,\ldots,m}$ is a partition of $\Sigma_2$ into sets of size $2$, and write $P_i=\{x_i,x_i'\}$ and $Q_i = \{y_i,y_i'\}$. Let us attempt to construct a coupling for the simple case that $a=1$ and $p=1-\frac{1}{k-1}$. In that case, for $A_1$ to have the correct marginal, for each $i$ we need to include at least one of the elements from $P_i$ in $A_1$ with probability $q = p^2+2p(1-p)$, and in that case we must arrange that no element from $Q_i$ appears in $B_1\cap\ldots\cap B_{k-1}$. This can be achieved in one of two ways:
\begin{enumerate}
    \item Choosing distinct $j,j'$, then omitting $y_i$ from $B_{j}$ and $y_i'$ from $B_{j'}$, and including both in the rest of the $B$'s.
    \item Choosing some $j$, then omitting both $y_{i},y_{i}'$ from $B_j$ and including both in the rest of the $B$'s.
\end{enumerate} 
Since the probability $B_s\cap Q_i=\empty$ has to be $(1-p)^2$, we can afford to take the second option with probability $\frac{(1-p)^2}{q}$, and in the rest of the probability we choose the first item.  In the case $A_1\cap P_i=\emptyset$, we can afford to include both $y_i,y_i'$ in all of the $B_j$.

A direct calculation shows that the probability that $B_s\cap Q_i = Q_i$ is equal to, 
\begin{equation}\label{eq:coupling_intui}
(1-p)^2+q\left(\frac{(1-p)^2}{q}\frac{k-2}{k-1}+\left(1-\frac{(1-p)^2}{q}\right)\frac{k-3}{k-1}\right)
= 1-\frac{2}{k-1}+\frac{3}{(k-1)^3}.
\end{equation} 
On the other hand, $p^2 = 1-\frac{2}{k-1}+\frac{1}{(k-1)^2}$, and direct comparison shows that~\eqref{eq:coupling_intui} exceeds $p^2$ for $k=3$. This means that we overshot the probability of $B_s\cap Q_i = Q_i$ (but got the probability that $B_s\cap Q_i = \emptyset$ correctly), so by omitting more elements from the $B_j$ sometimes we can arrange that its marginal is $p$-biased. In fact, for $k=3$, this slack implies that such a coupling can be constructed for some $p > 1-\frac{1}{k-1}$, already giving an improvement over~\cite{DGKR}.

The above coupling shows the potential gain in having the ability to trade off omissions between several $B$'s, but concretely it only works for $k=3$, and for a modest gain. Intuition suggests that such a tradeoff for both the $A$'s and $B$'s may potentially lead to couplings for higher values of $p$ (but this is only possible for $k\geq 4$). This turns out to be correct, and when $a,b$ are both at least $2$ there are already couplings that go all the way to $p=1-\frac{1}{k}$.\footnote{This is the limit, as in expectation, for each $i$, the number of elements from $P_i$ not in $A_1\cap\ldots\cap A_a$ plus the number of elements from $Q_i$ not in $B_1\cap\ldots\cap B_b$ is at most $2k\cdot(1-p)<2$ for $p>1-\frac{1}{k}$.}

\paragraph{Statement of AI Use:} The communicators originally obtained the results presented here through interactive sessions with ChatGPT 5.6 Pro. In a separate follow-up interaction, the model independently derived the results in a single attempt. The present manuscript was subsequently written by the communicators to make the results accessible to a broad TCS audience.

\section{Preliminaries}
\paragraph{Notations:} We denote $[n] = \{1,\ldots,n\}$. For a graph $G = (V,E)$ and a vertex $v\in V$, we denote by $N_{\leq r}(v)$ the set of vertices reachable from $v$ in at most $r$ steps. For sets $A,B$, we denote by $A\Delta B$ the symmetric difference of $A$ and $B$.
\subsection{PCPs}
We will use the $2$-to-$1$ Games with imperfect completeness~\cite{KMS1,DKKMS1,DKKMS2,KMS2}, and we make two remarks. First, the completeness in the version stated below is not the same as in the papers, but their construction can easily be seen to achieve it (and is crucial towards their application to vertex cover). Second, their proof gives weighted instances, but using standard techniques (e.g.~\cite{Trevisan,MoshkovitzRaz,DinurHarsha}) one can get unweighted, biregular instances.
\begin{thm}\label{thm:221}
    For all $\eps>0$ there exists $C\in\mathbb{N}$ such that given a biregular $2$-to-$1$ game $\Psi = ((L\cup R,E),\Sigma_L,\Sigma_R,\{\phi_{e}\}_{e\in E})$ with alphabet size at most $C$ and degree at most $C$, it is NP-hard to distinguish between the following two cases:
    \begin{enumerate}
        \item \textbf{Yes case:} There exist $L'\subseteq L$ of fractional size at least $1-\eps$, a partial labeling $A_{L'}\colon L'\to\Sigma_{L}$ and a labeling $A_{R}\colon R\to \Sigma_R$ that satisfy all of the constraints between $L'$ and $R$.
        \item \textbf{No case:} No labelings $A_{L}\colon L\to\Sigma_L$, $A_R\colon R\to\Sigma_R$ satisfy more than an $\eps$ fraction of the constraints.
    \end{enumerate}
\end{thm}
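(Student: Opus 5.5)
The plan is to derive Theorem~\ref{thm:221} from the $2$-to-$2$ Games theorem of~\cite{KMS1,DKKMS1,DKKMS2,KMS2}, in the form that gives $2$-to-$1$ hardness with imperfect completeness. I would not reprove the soundness analysis: the No case is exactly their soundness guarantee. The work has two parts. First, go back into their reduction and check that completeness has the stronger ``perfect on most of $L$'' form. Second, convert their weighted instance into an unweighted, biregular, bounded-degree one while keeping both the completeness structure and the soundness.

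\textbf{Step 1 (completeness).} Their reduction composes a smooth outer label cover with the Grassmann encoding. Vertices on the left are (tuples of) subspaces of a constant-dimensional space, and the projections to the right side are restrictions to codimension-one subspaces. In the Yes case, fix a satisfying assignment of the outer instance and label each vertex by the restriction of the corresponding linear function. The only place this fails is at left vertices whose subspace is ``bad'', meaning it intersects a certain kernel nontrivially (these are the side conditions coming from smoothness). I would check two things:
\begin{enumerate}
    \item The bad vertices all lie on one side, $L$.
    \item By smoothness, they form at most an $\eps$ fraction of $L$ (in the vertex weighting).
\end{enumerate}
Setting $L'$ to be the good vertices, every constraint between $L'$ and $R$ is satisfied by this honest labeling. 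This is the Yes case as stated. The key point is that the completeness loss is a loss of vertices, not of scattered edges. I expect this to be the main obstacle: it requires tracing their construction carefully rather than citing a black-box statement. If some bad configurations turn out to live on $R$, I would push them to $L$ by deleting all their neighbours in $L$. This requires the bad set in $R$ to be tiny relative to the degree, which smoothness again provides.

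\textbf{Step 2 (regularization).} Their instance is weighted, with weights on vertices and edges. I would apply the standard route of~\cite{Trevisan,MoshkovitzRaz,DinurHarsha}:
\begin{enumerate}
    \item Round the weights to rationals with a common denominator, at loss $\eps$.
    \item Replace each vertex by a number of copies proportional to its weight, and each edge by a suitable complete bipartite set of copies. This gives an unweighted instance in which fractions of vertices and of edges match the weighted measures.
    \item Make the instance biregular, for example by replacing each right vertex by a copy per incident edge and connecting copies along a constant-degree expander, using projection-preserving equality constraints in the manner of~\cite{MoshkovitzRaz}.
\end{enumerate}
Duplication preserves fractional sizes, so $L'$ becomes a set of copies of fractional size at least $1-\eps$, and it is still fully satisfied under the copied labeling. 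Soundness transfers because any labeling of the new instance can be decoded to the old one, by taking a random or plurality copy, with at most an additive loss.

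\textbf{Step 3 (boundedness).} The alphabet sizes depend only on the Grassmann dimension and the outer alphabet, and hence only on $\eps$. Degrees are bounded because the outer smooth label cover can be taken of constant degree, and each outer constraint spawns a gadget whose size depends only on $\eps$. Rounding weights with a denominator depending only on $\eps$ keeps the degrees bounded after duplication. Choosing all parameters as functions of $\eps$ gives the constant $C$.
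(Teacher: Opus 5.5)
Your overall route matches the paper's. The paper offers only a citation of the $2$-to-$1$ theorem from the $2$-to-$2$ Games papers, plus two remarks: their construction visibly gives the ``perfect on a $1-\eps$ fraction of $L$'' completeness, and standard techniques remove weights and give biregularity.

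\textbf{Step 1 rests on a false premise.} You correctly identify Step 1 as the only non-black-box point, but the justification you sketch is wrong, and if it were right it would prove too much. In the Yes case you do not have a satisfying assignment of the outer instance. The outer game is built from Gap-$3$Lin, whose completeness is necessarily imperfect, since a satisfiable linear system is found by Gaussian elimination. Linearity is exactly what the Grassmann encoding requires, and this, not smoothness, is why the theorem has imperfect completeness.

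Degenerate configurations, such as a subspace meeting the side-condition space nontrivially, depend only on the instance and not on any assignment. Smoothness makes them rare, so they could simply be deleted at negligible cost to soundness. If they were the only source of loss, this would give $2$-to-$1$ hardness with perfect completeness, which is open.

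The correct check is as follows:
\begin{enumerate}
\item Take an assignment $a$ satisfying a $1-\eps_0$ fraction of the equations.
\item Call a left vertex good if all $k$ equations of its underlying outer question are satisfied by $a$.
\item Label the good left vertices and the right vertices by restrictions of $x\mapsto\langle a,x\rangle$.
\end{enumerate}
All edges leaving good left vertices are then satisfied. The bad left vertices have measure at most $k\eps_0\leq\eps$ by a union bound, with $\eps_0$ chosen after $k$; smoothness plays no role here. So the loss lands on $L$ automatically, and your fallback for bad right vertices is not needed. As stated it would also be unjustified: deleting all $L$-neighbours of a bad right set needs that set's density to be well below $1/d_L$, and smoothness does not obviously give that.

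\textbf{Drop item 3 of Step 2.} Linking copies of a right vertex by equality constraints either creates right--right edges or, if routed through new left vertices, creates left vertices of degree $2$ carrying $1$-to-$1$ constraints. The output is then neither $2$-to-$1$ nor biregular.

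Item 3 is also unnecessary. Let $w$ be the edge distribution and $w_L,w_R$ its marginals.
\begin{enumerate}
\item Give each $u$ a number of copies proportional to $w_L(u)$, and each $v$ a number proportional to $w_R(v)$.
\item Join every copy of $u$ to every copy of $v$ with multiplicity proportional to $w(u,v)/(w_L(u)w_R(v))$.
\end{enumerate}
Then all left copies have the same degree and all right copies have the same degree. Every constraint is an original $2$-to-$1$ constraint. Decoding each vertex to an independent uniformly random copy preserves the value exactly. The only loss comes from rounding, and degrees stay bounded provided the weights have bounded ratios, which can be arranged, for example by starting from a regular $3$-Lin instance.
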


We will use the quasi-linear size PCP construction of~\cite{BMV}. As before, their proof gives weighted instances, but using standard techniques (e.g.~\cite{Trevisan,MoshkovitzRaz,DinurHarsha}) one can get unweighted, biregular instances of constant degree.
\begin{thm}\label{thm:quasilinear}
    For all $\eps>0$ there exists $C\in\mathbb{N}$ such that the following holds. There is a polynomial time reduction that given a $3$-SAT formula $\varphi$ of size $n$, produces a biregular projection label cover instance $\Psi = ((L\cup R,E),\Sigma_L,\Sigma_R,\{\phi_{e}\}_{e\in E})$ of size at most $n\log^C n$,  alphabet of size at most $C$ and degree at most $C$, such that the following holds:
    \begin{enumerate}
        \item \textbf{Yes case:} If $\varphi$ is satisfiable, then $\mathsf{val}(\Psi)=1$.
        \item \textbf{No case:} If $\varphi$ is unsatisfiable, then $\mathsf{val}(\Psi)\leq \eps$.
    \end{enumerate}
\end{thm}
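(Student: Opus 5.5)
The statement is essentially the quasi-linear PCP of~\cite{BMV}, so the plan is to invoke that result as a black box and then post-process it. \cite{BMV} gives, for every $\eps>0$, a polynomial time reduction from $3$-SAT of size $n$ to a \emph{weighted} projection label cover instance. This instance has size $n\log^{C'}n$, constant alphabet, perfect completeness and soundness $\eps/2$. The remaining work is to make the instance unweighted and biregular of constant degree, while changing the size by only a polylogarithmic (in fact constant) factor, keeping projection constraints, and losing at most $\eps/2$ in soundness. I will carry this out in three steps.

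\textbf{Step 1: remove weights.} I first round each edge weight to a multiple of $1/M$, with $M$ polynomially bounded relative to the number of edges, and discard edges of negligible weight. This changes the value by at most $\eps/8$. I then replace each edge by the corresponding number of parallel copies. Since the~\cite{BMV} distribution is built from explicit samplers over a quasi-linear support, I expect the weights to be nearly uniform up to constant factors. If so, the blow-up is constant and the size stays $n\log^{C'}n$. If not, the blow-up is at most a further polylog factor, which is absorbed into $C$.

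\textbf{Step 2: reduce the degrees.} The vertex degrees may still be unbounded (polylog), so next I apply the degree-reduction technique of~\cite{MoshkovitzRaz,DinurHarsha}. The construction is as follows.
\begin{itemize}
\item For each vertex $x$ of degree $d_x$, create $d_x$ copies of $x$, one per incident edge.
\item Connect the copies of $x$ through a constant-degree expander $H_x$ on $d_x$ vertices, with equality-style consistency constraints.
\item Keep the original edges as edges between single copies.
\end{itemize}
To stay within projection games I will use the variant in which one side is duplicated and its copies are linked to the other side's copies via expander edges. Then every constraint is the original projection composed with a copy-identity, and this composition is still a projection.

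\textbf{Soundness of Step 2 (the main obstacle).} Given a labeling of the new instance, I decode by taking, for each original vertex, the plurality label among its copies, and show that a good labeling of the new instance yields a good labeling of the original. The expander mixing lemma implies that if the copies of $x$ are labeled inconsistently on a large set, many expander constraints are violated. Combining this with the original soundness gives new soundness $\eps/2+O(\lambda(H))$. This step is only a constant-factor loss for projection games. The delicate part is to keep the new constraints projections and to avoid passing through the squaring soundness loss of a $1\to 2$ style argument. I would follow~\cite{MoshkovitzRaz} closely here, choosing the expander degree $D=D(\eps)$ large enough that $O(\lambda(H))\leq \eps/4$. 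Since the total number of vertices and edges grows by a factor $O(D)$, the size remains $n\log^{C}n$ with degrees at most $D$.

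\textbf{Step 3: biregularize.} Finally I make both sides regular by padding: I add parallel edges, or duplicate vertices so that all degrees are equal to the maximal constant degree on each side. This is a constant blow-up and changes the value by at most a constant factor of the defect. Perfect completeness is preserved throughout, since each transformation maps a satisfying assignment to a satisfying one by labeling every copy with its original label. Setting $C$ to be the maximum of the alphabet size, the degree $D$, and the exponent of the logarithm completes the proof.
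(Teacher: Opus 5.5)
Your overall route is the same as the paper's: treat the weighted projection instance of \cite{BMV} as a black box and post-process it by unweighting, degree reduction and regularization. The paper simply cites \cite{BMV} and defers the post-processing to \cite{Trevisan,MoshkovitzRaz,DinurHarsha}. The gap is in Step 2. The soundness mechanism you describe is the one used for constant-gap CSPs, and it does not give value $\leq\eps$.

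With equality constraints among the copies of $x$, the new instance always has value at least $1/2$. Labeling all copies of each vertex identically satisfies every equality edge. Labeling each copy so as to satisfy its own unique original edge satisfies every original edge. (The equality edges also lie inside one side, so the result is not a bipartite label cover.) You then switch to the bipartite variant, but your soundness paragraph still argues via violated consistency edges, which that variant does not have. Plurality-of-copies decoding is also invalid there. Take a satisfying labeling of the original instance and keep it on $L$. For each $v\in R$, give slightly fewer than half of the copies of $v$ the correct label and the rest a fixed wrong label. The new labeling has value about $1/2$, while its plurality decoding has value $0$, so no bound of the form $\mathsf{val}(\text{decoded})\geq\mathsf{val}(\text{new})-O(\lambda)$ can hold.

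What works is decoding each vertex by the label of a \emph{uniformly random} copy. For $v\in R$ with neighbours $u_1,\ldots,u_{D_v}$, create copies $v^{(1)},\ldots,v^{(D_v)}$. Put the constraint $\phi_{u_i,v}$ on $(u_i,v^{(j)})$ for every edge $(i,j)$ of a $d$-regular $\lambda$-expander $H_v$ on $[D_v]$. Set $f_\sigma(i)=\mathbf{1}[\phi_{u_i,v}(A(u_i))=\sigma]$ and $g_\sigma(j)=\mathbf{1}[A(v^{(j)})=\sigma]$. The new value around $v$ is $\sum_\sigma\langle f_\sigma,\mathrm{T}_{H_v}g_\sigma\rangle$, while the random-copy decoding achieves $\sum_\sigma\mathbb{E}[f_\sigma]\,\mathbb{E}[g_\sigma]$. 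The expander mixing lemma plus Cauchy--Schwarz over $\sigma$ bounds the difference by $\lambda\bigl(\sum_\sigma\norm{f_\sigma}_2^2\bigr)^{1/2}\bigl(\sum_\sigma\norm{g_\sigma}_2^2\bigr)^{1/2}=\lambda$. The same operation on the projecting side $L$ loses $\lambda\sqrt{\card{\Sigma_L}}$, since there $\sum_\sigma\norm{g_\sigma}_2^2$ counts preimages; this is harmless for constant alphabets.

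Do this first on $R$ with $d$-regular expanders and then on $L$ with $d'$-regular expanders. First multiply all edges by a constant so that expanders of every needed size exist, e.g.\ by \cite{Alon2021Expanders}. This yields left degree exactly $d'$ and right degree exactly $dd'$, so Step 3 is unnecessary. That matters, because padding by parallel edges cannot biregularize in general. Take left vertices $u,u'$, right vertices $v_1,v_2$, and edges $uv_1,uv_2,u'v_1$ with multiplicities $a,b,c\geq 1$. Left-regularity gives $a+b=c$ and right-regularity gives $a+c=b$, which force $a=0$.

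Finally, in Step 1 take $M=\Theta(\card{E}/\eps)$. Then $\sum_e\lfloor w_eM\rfloor\leq M$, so the blow-up is a constant regardless of the weights, and the discarded mass is at most $\card{E}/M$. A superlinear polynomial $M$ would destroy quasi-linear size.
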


\subsection{Biased Long-code}
We will use the $p$-biased long-code and associated analytical machinery from~\cite{DinurSafra}. We begin with the definition of the $p$-biased Kneser graph.
\begin{definition}
    For $p\in (0,1)$ and a finite set $\Sigma$, the weighted Kneser graph $G_{p}(\Sigma)$ has the vertex set $P(\Sigma)$, and $(A,B)$ is an edge if $A\cap B=\emptyset$. The weight of the vertices is defined by $\mu_p(A) = p^{|A|}(1-p)^{|\Sigma\setminus A|}$.
\end{definition}
Next, we define the measure and the influence of a family of vertices in the Kneser graph.
\begin{definition}
    Let $\mathcal{F}\subseteq P(\Sigma)$.
    \begin{enumerate}
        \item We define $\mu_p(\mathcal{F}) = \Prob{A\sim \mu_p}{A\in \mathcal{F}}$.
        \item For $\sigma\in \Sigma$, we define the influence of $\sigma$ by $I_{\sigma}[\mathcal{F}; \mu_p] = \Prob{A\sim\mu_p}{\text{exactly one of $A$,$A\Delta\{\sigma\}$ is in $\mathcal{F}$}}$, and define the total influence of $\mathcal{F}$ by 
        $I[\mathcal{F}; \mu_p] = \sum\limits_{\sigma\in\Sigma}I_{\sigma}[\mathcal{F}; \mu_p]$.
    \end{enumerate}
\end{definition}
We also define the notion of monotone families.
\begin{definition}
    We say a family $\mathcal{F}\subseteq P(\Sigma)$ is monotone if $A\in\mathcal{F}$ and $A\subseteq B$, then $B\in\mathcal{F}$.
\end{definition}

We will also make use of the Russo-Margulis Lemma~\cite{Russo,Margulis}, stated below.
\begin{lemma}\label{lem:russo_margulis}
Suppose $\mathcal{F}\subseteq P(\Sigma)$ is monotone. Then
$\frac{d\mu_p(\mathcal{F})}{dp}(p) = I[\mathcal{F};\mu_p]$.
\end{lemma}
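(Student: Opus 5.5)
The Russo--Margulis lemma (Lemma~\ref{lem:russo_margulis}) is a statement about a polynomial in $p$, so I plan to differentiate term by term and regroup according to the coordinate being flipped. Write $n=\card{\Sigma}$. By definition,
\[
\mu_p(\mathcal{F})=\sum_{A\in\mathcal{F}}p^{|A|}(1-p)^{n-|A|}.
\]
Equivalently, view $\mu_p$ as a product measure. Give each $\sigma\in\Sigma$ its own parameter $p_\sigma$, and let $\mu_{\vec p}(A)=\prod_{\sigma\in A}p_\sigma\prod_{\sigma\notin A}(1-p_\sigma)$. Then $\mu_p(\mathcal{F})=g(p,\ldots,p)$ with $g(\vec p)=\mu_{\vec p}(\mathcal{F})$. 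By the chain rule,
\[
\frac{d\mu_p(\mathcal{F})}{dp}=\sum_{\sigma\in\Sigma}\frac{\partial g}{\partial p_\sigma}\Big|_{\vec p=(p,\ldots,p)}.
\]
It therefore suffices to show that each partial derivative equals $I_\sigma[\mathcal{F};\mu_p]$.

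To do this, fix $\sigma$ and condition on $S=A\setminus\{\sigma\}\subseteq\Sigma\setminus\{\sigma\}$, which is distributed according to the product measure on the remaining coordinates and does not depend on $p_\sigma$. Then
\[
g(\vec p)=\sum_{S}\mu(S)\bigl(p_\sigma\,\mathbbm{1}[S\cup\{\sigma\}\in\mathcal{F}]+(1-p_\sigma)\,\mathbbm{1}[S\in\mathcal{F}]\bigr).
\]
This expression is affine in $p_\sigma$, so
\[
\frac{\partial g}{\partial p_\sigma}=\sum_S\mu(S)\bigl(\mathbbm{1}[S\cup\{\sigma\}\in\mathcal{F}]-\mathbbm{1}[S\in\mathcal{F}]\bigr).
\]
Here monotonicity is used: $S\in\mathcal{F}$ implies $S\cup\{\sigma\}\in\mathcal{F}$, so each bracket lies in $\{0,1\}$. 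It equals $1$ exactly when precisely one of $S$ and $S\cup\{\sigma\}$ lies in $\mathcal{F}$. Hence the derivative is the probability that $\sigma$ is pivotal.

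It remains to compare this with the definition $I_\sigma[\mathcal{F};\mu_p]=\Pr_{A\sim\mu_p}[\text{exactly one of }A,A\Delta\{\sigma\}\in\mathcal{F}]$. The event depends only on $A\setminus\{\sigma\}=S$, since $\{A,A\Delta\{\sigma\}\}=\{S,S\cup\{\sigma\}\}$. Averaging over whether $\sigma\in A$ therefore gives the same probability, and at $\vec p=(p,\ldots,p)$ this is exactly the pivotal probability computed above. Summing over $\sigma$ yields $I[\mathcal{F};\mu_p]$.

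The only point needing care is this last identification: checking that the influence, as defined, does not carry an extra factor such as $p(1-p)$ relative to the pivotal probability. Because the definition is phrased as the probability that $\sigma$ is pivotal (and not, say, as a variance), no such normalization factor appears, and the argument is otherwise routine.
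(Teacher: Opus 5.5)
Your proof is correct and complete. The paper does not prove this lemma itself; it simply cites it from Russo and Margulis. Your argument is the standard proof of that cited fact: introduce per-coordinate parameters, apply the chain rule, note the affine dependence on $p_\sigma$, and use monotonicity to turn the difference of indicators into the pivotality indicator. You are also right that the paper defines influence as the plain pivotal probability, so no $p(1-p)$ normalization factor appears.
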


We will also need Friedgut's junta theorem~\cite{Friedgut}. To state it, we first define a junta.
\begin{definition}
    Let $J\subseteq \Sigma$, $\mathcal{F}\subseteq P(\Sigma)$. We say that $\mathcal{F}$ is a $J$-junta if there exists $\mathcal{J}\subseteq P(J)$ such that
    \[
    \mathcal{F} = \sett{A\subseteq \Sigma}{A\cap J\in \mathcal{J}}.
    \]
\end{definition}
In words, if $\mathcal{F}$ is a $J$-junta, then membership in $\mathcal{F}$ only depends on the projection of $A\subseteq \Sigma$ on $J$. 
\begin{lemma}\label{lem:friedgut}
    Suppose $\mathcal{F}\subseteq P(\Sigma)$ satisfies that $I[\mathcal{F};\mu_p]\leq K$. Then for every $\eps>0$ there exists $J\subseteq \Sigma$ of size $2^{O_{p,\eps}(K)}$ and a $J$-junta $\mathcal{J}\subseteq P(\Sigma)$ such that $\mu_p(\mathcal{F}\Delta\mathcal{J})\leq \eps$.
\end{lemma}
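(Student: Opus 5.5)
We prove Lemma~\ref{lem:friedgut} by Fourier analysis over the $p$-biased product measure combined with hypercontractivity, following Friedgut's original argument. Identify $\mathcal{F}$ with its indicator $f\colon P(\Sigma)\to\{0,1\}$. Use the orthonormal basis $\chi_S(A)=\prod_{\sigma\in S}\chi(\sigma)$ of $L^2(\mu_p)$, where
\[
\chi(\sigma)=\sqrt{(1-p)/p}\ \text{ if } \sigma\in A, \qquad \chi(\sigma)=-\sqrt{p/(1-p)}\ \text{ otherwise}.
\]
Write $f=\sum_S \hat f(S)\chi_S$.

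\textbf{Influences in Fourier terms.} Define the discrete derivative $D_\sigma f(A)=f(A\cup\{\sigma\})-f(A\setminus\{\sigma\})$, which takes values in $\{-1,0,1\}$. Then $I_\sigma[\mathcal{F};\mu_p]=\Prob{}{D_\sigma f\neq 0}=\|D_\sigma f\|_q^q$ for every $q>0$. A direct computation gives
\[
I_\sigma[\mathcal{F};\mu_p]=\frac{1}{p(1-p)}\sum_{S\ni\sigma}\hat f(S)^2 .
\]
Summing over $\sigma$, the high-degree Fourier weight satisfies $\sum_{|S|>d}\hat f(S)^2\leq p(1-p)K/d$. Choosing $d=\Theta_p(K/\eps)$ makes this at most $\eps/100$.

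\textbf{Choice of $J$ and the low-degree weight outside $J$.} Fix a threshold $\tau>0$ and let $J=\{\sigma: I_\sigma\geq\tau\}$, so that $|J|\leq K/\tau$. It remains to bound the low-degree weight outside $J$,
\[
W=\sum_{|S|\leq d,\ S\not\subseteq J}\hat f(S)^2\leq\sum_{\sigma\notin J}\sum_{S\ni\sigma,\,|S|\leq d}\hat f(S)^2 .
\]
The inner sum is at most $\rho^{-2d}\|T_\rho D_\sigma f\|_2^2$, up to a factor depending on $p$. The $p$-biased Bonami--Beckner hypercontractive inequality gives, for some $\rho=\rho(p)>0$, that $\|T_\rho g\|_2\leq\|g\|_{4/3}$. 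Applying it to $g=D_\sigma f$,
\[
\|T_\rho D_\sigma f\|_2^2\leq\|D_\sigma f\|_{4/3}^2=I_\sigma^{3/2}.
\]
Therefore
\[
W\lesssim_p\rho^{-2d}\sum_{\sigma\notin J}I_\sigma^{3/2}\leq\rho^{-2d}\tau^{1/2}K .
\]
Taking $\tau=\left(\eps\rho^{2d}/(100K)\right)^2$ makes $W\leq\eps/100$. With this choice,
\[
|J|\leq K/\tau=2^{O_{p,\eps}(K)}
\]
(in fact $2^{O_p(K/\eps)}$ after absorbing polynomial factors).

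\textbf{Rounding to a junta.} Let $g=\sum_{S\subseteq J}\hat f(S)\chi_S$. This is exactly the conditional expectation $\Expect{}{f\mid A\cap J}$, so $g$ depends only on $A\cap J$, and by the two bounds above $\|f-g\|_2^2\leq\eps/50$. Define $\mathcal{J}=\{A: g(A)\geq 1/2\}$, which is a $J$-junta. Whenever $f(A)\neq 1_{\mathcal{J}}(A)$ we have $|f(A)-g(A)|\geq1/2$. Hence
\[
\mu_p(\mathcal{F}\Delta\mathcal{J})\leq4\|f-g\|_2^2\leq\eps .
\]

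\textbf{Main obstacle.} The delicate step is the hypercontractive inequality in the $p$-biased setting, namely obtaining $\rho(p)$ (depending only on $p$) such that $\|T_\rho g\|_2\leq\|g\|_{4/3}$. I would prove it by a two-point inequality on a single biased bit via explicit calculus, with $\rho$ degrading as $p\to0$ or $p\to1$. The tensorization to product measure is then standard, and everything else in the argument is bookkeeping.
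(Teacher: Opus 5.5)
Your proof is correct. The paper gives no proof of this lemma and simply cites Friedgut's junta theorem; your argument is the standard Friedgut proof transplanted to $\mu_p$, so it is essentially the approach the paper relies on. The steps are: truncate the Fourier degree using the total influence, bound the low-degree weight on coordinates with $I_\sigma<\tau$ via $p$-biased $(4/3\to 2)$ hypercontractivity applied to $D_\sigma f$, and round the projection onto $J$ at $1/2$.
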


\paragraph{Restrictions:} Given a family $\mathcal{F}\subseteq P(\Sigma)$, $I\subseteq \Sigma$ and $A\subseteq I$, we define the restriction 
\[
\mathcal{F}_{I\rightarrow A}
=\sett{S\subseteq \Sigma\setminus I}{S\cup A\in\mathcal{F}}.
\]
We will use the fact that on the $p$-biased measure, restricting $K$ coordinates increases the individual influence of each other coordinate by factor at most $(\min(p,1-p))^{-K}$. 

\section{The Multilayered PCPs}
In this section we show our multilayered PCP constructions.
\subsection{Proof of Theorem~\ref{thm:multilayered_222}}\label{sec:222_multi} 
Start with a label cover instance $\Psi=((L\cup R,E),\Sigma_L,\Sigma_R,\{\phi_{e}\}_{e\in E})$ as in Theorem~\ref{thm:221} with $\eps$. The idea is to produce a multilayered version of the basic collision game of $\Psi$ as discussed in Section~\ref{sec:techniques}. More precisely, we produce the instance $\Psi'$ that has $\ell$ layers, layer $i$ is $L\times\{i\}$ and has alphabet $\Sigma_L$. For $1\leq i<j\leq \ell$, the constraints in $\Psi'_{i,j}$ are defined and weighted as follows:
\begin{enumerate}
    \item Sample $v\in R$ uniformly.
    \item Sample $u,u'\in L$ neighbours of $v$ independently.
    \item Output the edge $((u,i),(u',j))$ with the constraint 
    \[
    \Phi_{(u,i),(u',j)}=\{(\sigma,\sigma')~|~\exists \tau\in \Sigma_R\text{ such that }\phi_{u,v}(\sigma) = \phi_{u',v}(\sigma')=\tau\}.
    \]
\end{enumerate}
This completes the description of the reduction. The completeness and the fact that each $\Psi_{i,j}'$ is a $2$-to-$2$ game are obvious. As per the soundness, if there are assignments $A_{i}\colon L\times \{i\}\to \Sigma_L$, 
$A_{j}\colon L\times \{j\}\to \Sigma_L$ satisfying at least $\eps$ weight of the constraints of $\Psi'_{i,j}$, define the randomized assignment $A_R\colon R\to \Sigma_R$ where for each $v\in R$, we sample a neighbour $u\in L$ of $v$ uniformly, and then define
\[
A_R(v) = \phi_{u,v}(A_j((u,j))).
\]
Fix an edge $(u,v)$ of $\Psi$, and let $u'$ be the neighbour that $v$ picked. Note that if $(u,u')$ is satisfied in $\Psi_{i,j}'$, then $(u,v)$ would be satisfied by $u\rightarrow A_i((u,i))$, $v\rightarrow A_R(v)$. It follows that the expected fraction of edges that are satisfied by $A_i$ (thinking of it as an assignment to $L$) and $A_R$ is equal to the weight of constraints satisfied by $A_i$, $A_j$, and thus at least $\eps$.

As per the weakly-dense property, fix $I$ of size at least $2/\lambda$, and $A_i\subseteq L\times\{i\}$ of density exactly $\lambda$ for $i\in I$. Sample $v\in R$ uniformly, $u,u'\in L$ neighbours of $v$, and let $Z_i = 1_{(u,i)\in A_i}$, 
$Z_{i}' = 1_{(u',i)\in A_i}$. Then
\[
\Expect{i,j\in I}{\Expect{v,u,u'}{Z_iZ_j'}}
=
\Expect{v}{\Expect{i,u}{Z_i}^2}
\geq 
\Expect{v}{\Expect{i,u}{Z_i}}^2
\geq \lambda^2.
\]
The contribution from the case that $i=j$ is at most $\lambda^2/2$, so we conclude there are $i<j$
such that $\Expect{v,u,u'}{Z_iZ_j'}\geq \frac{1}{2}\lambda^2$, as required.

\subsection{A Quasi-linear, Expanding Multilayered PCP: Proof of Theorem~\ref{thm:multilayered_expand}}
We isolate one of the ideas in the proof of Theorem~\ref{thm:multilayered_expand} by showing the proof in the case $\ell=2$. More precisely, we show the following result:
\begin{thm}\label{thm:quasilinear_expanding_2}
    For all $\lambda\in (0,1]$, $\eps>0$ there exists $C\in\mathbb{N}$ such that the following holds. There is a polynomial time reduction that given a $3$-SAT formula $\varphi$ of size $n$, produces a biregular, projection label cover instance $\Psi = ((L\cup R,E),\Sigma_L,\Sigma_R,\{\phi_{e}\}_{e\in E})$ of size at most $n\log^C n$, alphabet of size at most $C$ and degree at most $C$, whose normalized adjacency operator has second singular value at most $\lambda$, such that the following holds:
    \begin{enumerate}
        \item \textbf{Yes case:} If $\varphi$ is satisfiable, then $\mathsf{val}(\Psi)=1$.
        \item \textbf{No case:} If $\varphi$ is unsatisfiable, then $\mathsf{val}(\Psi)\leq \eps$.
    \end{enumerate}
\end{thm}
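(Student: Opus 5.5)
We start from the instance $\Psi_0=((L_0\cup R_0,E_0),\Sigma_{L_0},\Sigma_{R_0},\{\phi_e\})$ of Theorem~\ref{thm:quasilinear} with soundness parameter $\eps_0$, to be chosen as a function of $\eps$. This instance is biregular with degrees at most $C_0$ and alphabets of size at most $C_0$. We then apply the Moshkovitz-style fortification described in Section~\ref{sec:techniques}.

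\textbf{Construction.} Fix a constant-degree bipartite expander $H=(A\cup B,E_H)$ whose right side $B$ is identified with $L_0$. We take $H$ to be $D$-left-regular and biregular, with normalized adjacency operator of second singular value at most $\lambda'$. Such graphs can be built explicitly in polynomial time for every $|B|$, for instance via a Ramanujan or zig-zag family, possibly after padding $L_0$ by a constant factor, with $D=D(\lambda')$. The new instance $\Psi$ has:
\begin{itemize}
\item left side $L=A$, where a vertex $a$ is labeled by a tuple assigning a $\Sigma_{L_0}$-label to each of its $D$ neighbours in $H$, so $|\Sigma_L|\le C_0^{D}$;
\item right side $R=R_0$ with alphabet $\Sigma_{R_0}$;
\item one edge $(a,v)$ for every length-$2$ walk $a\to u\to v$ with $u\in N_H(a)$ and $(u,v)\in E_0$, carrying the projection constraint $\sigma\mapsto\phi_{u,v}(\sigma_u)$.
\end{itemize}
Parallel edges are kept as multiplicities. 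Equivalently, one may take $L$ to be $A$ and treat each walk as a weighted edge, then use the standard duplication trick to make the instance unweighted and biregular. The size is $O(D C_0)\cdot n\log^{C_0}n$, and all parameters remain constant. Completeness is immediate: each $a$ copies the satisfying labels of its neighbours.

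\textbf{Expansion.} The bipartite adjacency operator of $\Psi$ factors as $M=M_H M_0$, where $M_H$ maps functions on $B=L_0$ to functions on $A$ by averaging, and $M_0$ maps functions on $R_0$ to functions on $L_0$. Both are averaging operators with stationary marginals, coming from biregularity, and the constant functions are preserved. For $f\perp \mathbf{1}$ on $R_0$, the image $M_0f$ is again orthogonal to constants under the uniform measure on $L_0$, since biregularity makes the pushforward uniform. Moreover $\|M_0f\|\le\|f\|$. Therefore
\[
\|M_HM_0 f\|\le \lambda'\|M_0f\|\le\lambda'\|f\|.
\]
Taking $\lambda'=\lambda$ gives second singular value at most $\lambda$. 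The only point to be careful about here is normalizing the measures consistently, which biregularity handles.

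\textbf{Soundness.} This is the step I expect to be the main obstacle. Suppose labelings $A_L,A_R$ satisfy more than an $\eps$ fraction of the edges of $\Psi$. Define an assignment to $L_0$ by setting $\mathrm{lab}(u)$ to be a plurality value among $\{A_L(a)_u: a\in N_H(u)\}$, or take a random neighbour's opinion. A pure averaging argument does not work, because different $a$'s may disagree about $u$. Instead we use the expander mixing lemma on $H$.

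For each $u$, consider the distribution of opinions $A_L(a)_u$ over $a\sim N_H(u)$. The probability that the walk $(a,u,v)$ is satisfied equals $\mathbb{E}_u\bigl[\sum_\sigma p_u(\sigma)\,s_u(\sigma)\bigr]$, where $p_u(\sigma)$ is the fraction of neighbours $a$ with opinion $\sigma$ on $u$, and $s_u(\sigma)$ is the fraction of edges $(u,v)$ of $\Psi_0$ satisfied by $\sigma$ given $A_R$.

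Using random decoding, where $u$ takes the opinion of a uniformly random neighbour $a$, the expected fraction of satisfied edges of $\Psi_0$ is exactly $\mathbb{E}_u\bigl[\sum_\sigma p_u(\sigma)s_u(\sigma)\bigr]$. This is the same quantity, so it exceeds $\eps$. Hence $\mathsf{val}(\Psi_0)>\eps$, and choosing $\eps_0=\eps$ gives soundness directly. This works because the right side is unchanged and each left vertex only projects a single coordinate.

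So the expander is needed only for the spectral bound, not for soundness, and the genuine care lies in the explicit construction of the biregular expander $H$ with the right side size matching $|L_0|$.
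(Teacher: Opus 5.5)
Your proposal is correct and follows the paper's proof essentially verbatim. It uses the same fortification by a biregular expander $H$, with left vertices holding $D$-tuples of labels and edges given by length-$2$ walks, and the same random-decoding argument giving $\mathsf{val}(\Psi)\leq\mathsf{val}(\Psi_0)$; as you eventually note, the expander plays no role in soundness, so the plurality and mixing-lemma detour can be dropped. It also uses the same factorization of the normalized adjacency operator; you bound it in the adjoint direction (from $R$ to $L$) rather than from $L$ to $R$ as the paper does, which is equivalent since singular values are preserved. The one point to tidy: the paper simply takes an explicit expander of every size with the right side exactly $L_0$, so no padding of $L_0$ is needed, and padding by unconstrained dummy vertices would in fact break biregularity and the operator factorization.
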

\begin{proof} 
    Start with a label cover instance $\Psi=((L\cup R,E),\Sigma_L,\Sigma_R,\{\phi_{e}\}_{e\in E})$  as in Theorem~\ref{thm:quasilinear}.
    Denote $N = |L|$ and take a bipartite biregular graph $H = (V\cup L, E')$ with sides of size $N$ where we identify its right side with $L$, degree $D = O_{\lambda}(1)$ and second singular value at most $\lambda$; such a graph can be found using the result of~\cite{Alon2021Expanders}. We define the label cover instance $\Psi' = (V\cup R, E'', \Sigma_L^{D}, \Sigma_R,\{\phi_{e}'\}_{e\in E''})$ as follows. For $x\in V$, fixing an ordering of its neighbours $u_1,\ldots,u_D$, in $H$  a label for $x$ is $(\sigma_1,\ldots,\sigma_D)\in\Sigma_L^D$, thought of as an assignment to $u_1,\ldots,u_D$. The edges of $\Psi'$ correspond to walks of length $2$, where the first step is according to $H$, and the second step is according to $(L\cup R,E)$. Letting this walk be denoted by $(x,u,v)$, and letting $i$ be the position of $u$ as a neighbour of $x$, the constraint  is defined as $\phi_{(x,u,v)}'(\sigma_1,\ldots,\sigma_D) = \phi_{(u,v)}(\sigma_i)$. This completes the construction. The size, alphabet size and the completeness are clear.

    For the soundness, suppose that $A_V, A_R$ are assignments to $\Psi'$, and consider the randomized assignment $A_L$ as follows: for a vertex $u$, sample a neighbour $x$ of $u$ in $H$, denote by $i$ the position of $u$ as a neighbour of $x$, and define $A_L(u) = A_V(x)_i$. Then fixing $v$ and sampling a neighbour $u$ of $v$ uniformly in $\Psi$, we see that the probability that the constraint $(u,v)$ is satisfied is equal to
    \[
    \Prob{u, (x,v)\in E''}{A_V, A_R\text{ satisfy the constraint $(x,v)$ formed by $u$}}.
    \]
    Taking expectation over $v$, we conclude that the expected fraction of constraints satisfied by $A_L,A_R$ is equal to the expected fraction of constraints satisfied by $A_V,A_R$. In particular, $\mathsf{val}(\Psi')\leq \mathsf{val}(\Psi)$.

    Finally, for the expansion, denoting the normalized adjacency operators of $(L\cup R,E)$ and $H$ by $\mathrm{T}_{\Psi}\colon L_2(L)\to L_2(R)$, $\mathrm{T}_H\colon L_2(V)\to L_2(L)$ respectively (from left to right), the normalized adjacency operator of $\Psi'$ is 
    $\mathrm{T}_{\Psi}\mathrm{T}_H\colon L_2(V)\to L_2(R)$. Thus, for any $f\colon V\to \mathbb{R}$ with average $0$ we have
    \[
    \norm{\mathrm{T}_{\Psi}\mathrm{T}_Hf}_2
    \leq \norm{\mathrm{T}_H f}_2
    \leq \lambda\norm{f}_2,
    \]
    where we use that $\mathrm{T}_{\Psi}$ is an averaging operator and thus a contraction, and that $\mathrm{T}_{H}$ has second singular value at most $\lambda$.
\end{proof}
With the construction of Theorem~\ref{thm:quasilinear_expanding_2} in hand, the proof of Theorem~\ref{thm:multilayered_expand} proceeds by designing a multilayered collision game as in Section~\ref{sec:222_multi}, and then regaining the projection property by enlarging the alphabets. 
\begin{proof}[Proof of Theorem~\ref{thm:multilayered_expand}]
   Fix $\eps,\lambda>0$ and number of layers $\ell\in\mathbb{N}$, and start with a label cover instance $\Psi=((L\cup R,E),\Sigma_L,\Sigma_R,\{\phi_{e}\}_{e\in E})$ from Theorem~\ref{thm:quasilinear_expanding_2} with the constant $C$ as therein. Let $\Psi_{\mathsf{collision}}$ denote the collision game of $\Psi$ as defined in Section~\ref{sec:techniques}. Consider the multilayered label cover instance $\Psi'$ that has the layers $L\times \{i\}$ for $i=1,\ldots,\ell$ with alphabets $\Sigma_L$, and for each $1\leq i<j\leq \ell$, constraints defined by the following process 
\begin{enumerate}
    \item Sample $v\in R$ with probability proportional to its degree.
    \item Sample $u,u'\in L$ neighbours of $v$ independently.
    \item Output the edge $((u,i),(u',j))$ with the constraint 
    \[
    \Phi_{(u,i),(u',j)}=\{(\sigma,\sigma')~|~\exists \tau\in \Sigma_R\text{ such that }\phi_{u,v}(\sigma) = \phi_{u',v}(\sigma')=\tau\}.
    \]
\end{enumerate}

The completeness and soundness are the same as in Section~\ref{sec:222_multi}, and the expansion follows from the expansion of $\Psi$. The game $\Psi'$ is not a projection game though, and to remedy that we move to the game $\Psi''$ defined as follows (a similar ``projection flipping'' operation was done by~\cite{MoshkovitzRaz,DinurHarsha}). 

Denote the left degree and right degree of $\Psi$ by $d_L,d_R$ and let $D = d_L d_R$.
For each $1\leq i\leq \ell$, we replace the alphabet of $(u,i)\in L\times \{i\}$ with the subset of strings from $\Sigma_L^{1+D+D^2+\ldots+D^{\ell-i}}$, which corresponds to an assignment to $u$ and to all vertices $v$ of distance at most $\ell-i$ in $\Psi_{\mathsf{collision}}$. More specifically, interpreting a string as $(\sigma_{v})_{v\in N_{\leq \ell-i}(u)}$, we only allow strings such that for each $(w,w')$ from $\Psi_{\mathsf{collision}}$ among the assigned vertices, the given labels satisfy $\Phi_{w,w'}$.  The edges of $\Psi''$ are the same as those of $\Psi'$, and the constraint on the edge $((u,i),(v,j))$ requires that the labeling of $(u,i)$ and of $(v,j)$ agree on $N_{\leq \ell-j}(v)$.

The completeness and expansion of $\Psi''$ are clear. As per soundness, note that given a labeling to $\Psi''$, we can consider the induced labeling to $\Psi'$ which results by only keeping the label that a vertex $(u,i)$ attributes to itself. If $((u,i),(v,j))$ is satisfied, then the label attributed to $v$ by $(u,i)$ agrees with the label $(v,j)$ attributes $v$, and the local labeling of $(u,i)$ is valid. Hence, these labels satisfy the constraint between $(u,i)$ and $(v,j)$ in $\Psi'$. It follows that $\mathsf{val}(\Psi'')\leq \mathsf{val}(\Psi')$, giving the soundness of the reduction. Finally, the projection follows because in $\Psi_{\mathsf{collision}}$,  the neighbourhood around $v$ of radius $\ell-j$ is contained in the neighbourhood around $u$ of radius $\ell-i$.
\end{proof}

\section{The Hypergraph Vertex Cover Result}
In this section we prove Theorem~\ref{thm:hypergraph_VC_improved_factor}. It is more convenient to prove the result in the language of independent sets. For $0<s<c<1$, an input to the promise problem $\mathsf{gapIS}_k[c,s]$ is a weighted $k$-uniform hypergraph which contains an independent set of weight at least $c$, or no independent set of weight exceeding $s$, and the goal is to distinguish between the two cases.
\begin{thm}\label{thm:hypergraph_IS}
    Let $k\geq 3$ be an integer, and let $\eps>0$ be sufficiently small.
    \begin{enumerate}
        \item The problem $\mathsf{gapIS}_3\left[2-\sqrt{2}-\eps,\eps\right]$ is NP-hard.
        \item For $k\geq 4$,  
        the problem $\mathsf{gapIS}_k\left[1-\frac{1}{k}-\eps,\eps\right]$ is NP-hard.
    \end{enumerate}
\end{thm}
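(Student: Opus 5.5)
We reduce from the multilayered $2$-to-$2$ PCP of Theorem~\ref{thm:multilayered_222}, following the template of~\cite{DGKR} but replacing their combinatorial inner analysis with a coupling-based one in the style of~\cite{KhotRegev,DinurSafra}. Fix $p=2-\sqrt{2}-\eps'$ for $k=3$ and $p=1-\frac1k-\eps'$ for $k\geq 4$, and write $k=a+b$ with $a=1,b=2$ for $k=3$, and $a,b\geq 2$ for $k\geq4$. Take $\ell$ large (depending on $\eps,p$) and $\lambda$ small, and apply Theorem~\ref{thm:multilayered_222}. The hypergraph has a block $P(\Sigma)$ for every vertex $x$ of every layer, weighted by $\frac1\ell\cdot\frac{1}{|X_i|}\mu_p$.

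\textbf{Inner gadget.} For each $2$-to-$2$ constraint $\Phi=\bigcup_i P_i\times Q_i$ between $x\in X_i$ and $y\in X_j$, we put hyperedges $(A_1,\ldots,A_a,B_1,\ldots,B_b)$ with $A$'s in the block of $x$ and $B$'s in the block of $y$, whenever $A_1\cap\cdots\cap A_a$ and $B_1\cap\cdots\cap B_b$ contain no pair from $\Phi$. The heart of the proof, and the step I expect to be the main obstacle, is constructing a distribution $\mathcal{D}$ on such tuples with the following properties. First, every coordinate is marginally $\mu_p$-distributed. Second, the distribution is independent across the blocks $P_i\cup Q_i$. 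Third, it is supported only on valid hyperedges. Fourth, it has full support on the relevant local configurations, i.e.\ positive probability $\geq \delta(p,k)$ for every local pattern compatible with validity.

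Per block, this is a finite LP on patterns in $\{0,1\}^{2a}\times\{0,1\}^{2b}$. For $k=3$, I would extend the computation~\eqref{eq:coupling_intui}: mix the three omission options and solve the marginal equations $\Pr[B_s\cap Q_i=\emptyset]=(1-p)^2$ and $\Pr[B_s\cap Q_i=Q_i]=p^2$ (and likewise for $A_1$), finding that feasibility holds exactly up to $p=2-\sqrt2$. For $a,b\geq2$, I would distribute the $2k(1-p)<2$ expected omissions symmetrically: with appropriate probabilities, either one element of $P_i$ is omitted from a single $A_s$ and its partner element from another, or the same is done on the $Q$ side, so that exactly one side's intersection misses $P_i$ (or $Q_i$). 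Then I would balance the probabilities by symmetry and check positivity for all $p<1-1/k$. Slack in $\eps'$ gives full support after mixing in a small amount of any valid pattern.

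\textbf{Completeness.} Given the labelings on $X_i'$, take in each block of $x\in X_i'$ the dictator family $\{A\ni A_i(x)\}$, and take nothing on $X_i\setminus X_i'$. Validity of the hyperedges forbids all-dictator hyperedges, since the labels are consistent pairs in $\Phi$, so this is an independent set of weight $\geq (1-\eps)p$.

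\textbf{Soundness.} Let $\mathcal{I}$ be independent of weight $\geq\eps$. By averaging, there are $\geq \eps\ell/2$ layers $i$ with a set $S_i$ of $\geq \eps/2$ fraction of vertices whose family $\mathcal{F}_x$ has $\mu_p\geq\eps/2$. Replace each family by its monotone closure; this keeps independence since hyperedges are downward-closed. Using Lemma~\ref{lem:russo_margulis}, pick $p'\in[p,p+\eps'/2]$ so that the total influence at $p'$ is bounded by $O(1/\eps')$ for most $x$, working with a $p'$-version of $\mathcal{D}$, which still exists by the slack. Lemma~\ref{lem:friedgut} then yields a junta $J_x$ of constant size. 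By weak density with $\ell\geq 4/\eps$, we find $i<j$ with a $\Omega(\eps^2)$ fraction of edges $(x,y)$ of $\Psi_{i,j}$ having both endpoints in the $S$'s. For such an edge, we claim there exist $\sigma\in J_x,\tau\in J_y$ with $(\sigma,\tau)\in\Phi$. Otherwise, the blocks touched by $J_x$ and those touched by $J_y$ can be handled separately: the independence of $\mathcal{D}$ across blocks, together with its full support, implies that with probability $\geq \delta^{|J_x|+|J_y|}\cdot(\eps/4)^k$ all $A_s\in\mathcal{F}_x$ and $B_s\in\mathcal{F}_y$. Here we use that the restricted families have non-trivial measure and low influence outside the junta, and argue via a Mossel-type invariance/low-influence lower bound, or simply by choosing the junta patterns inside the support. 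This contradicts independence. Assigning uniformly random labels from $J_x$ then satisfies $\Omega(\eps^2/|J|^2)$ of the constraints of $\Psi_{i,j}$, contradicting the No case for small enough $\eps$.
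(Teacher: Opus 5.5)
Your outer reduction, the completeness argument, the up-closure, the Russo--Margulis choice of $p'$, and the shape of the inner couplings all match the paper. Your $(a,b)=(1,2)$ split for $k=3$ is the paper's $(2,1)$ split with the sides swapped, and $2-\sqrt{2}$ is indeed the threshold for it.

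\textbf{The gap.} It is in the central soundness claim: that the Friedgut juntas $J_x,J_y$ themselves must contain a $\Phi$-pair. You justify this by ``low influence outside the junta,'' and Lemma~\ref{lem:friedgut} does not give that in a usable form. Suppose $J_x,J_y$ contain no $\Phi$-pair and you try to build a hyperedge. After fixing a good junta pattern $B'$ on the blocks of $J_y$, validity constrains the $A_s$ on the partner coordinates $\pi_1^{-1}(\pi_2(J_y))$. In every block where $B'$ is nonempty, $\bigcap_s A_s$ must miss $P_i$; the paper simply sets these coordinates to $\emptyset$. Symmetrically, the $B_r$ are constrained on $\pi_2^{-1}(\pi_1(J_x))$.

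These coordinates lie outside $J_x$, but $\eps_F$-closeness to a $J_x$-junta only tells you that each has influence $O(\eps_F)$ on $\mathcal{F}_x$. Fixing up to $2|J_y|$ of them, the only available bound on the change in the restricted family's measure is $\eps_F\cdot\min(p',1-p')^{-\Theta(|J_x|+|J_y|)}$. Since $|J|=2^{O(K/\eps_F)}$, no choice of $\eps_F$ makes this small. Taking instead the set of coordinates with influence above a threshold $\tau$ has the same circularity, because that set can have size $K/\tau$.

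Your full-support bound $\delta^{|J_x|+|J_y|}(\eps/4)^k$ does not escape this. After conditioning on patterns in the junta blocks, nothing guarantees that the conditional families keep non-trivial measure, let alone low influence. The factor $(\eps/4)^k$ for $k$ correlated families would also need a reverse-hypercontractivity or invariance argument that you have not set up.

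\textbf{How the paper closes it.} The missing idea is the \emph{extended junta}. After Friedgut's lemma gives $J_{i,x}$ of size at most $C$, set
$EJ_{i,x}=J_{i,x}\cup\{\sigma : I_\sigma[\mathcal{I}_{i,x};\mu_{p'}]\geq 2^{-10kC}\eta\}$.
The threshold is chosen only after $C$ is known, and $|EJ_{i,x}|=O_\eta(1)$ by the total-influence bound. The consistency claim is then proved for $EJ$ rather than $J$.

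If $EJ_{i,x}$ and $EJ_{j,y}$ contain no $\Phi$-pair, every coordinate of $\pi_1^{-1}(\pi_2(J_{j,y}))$ has influence below $2^{-10kC}\eta$ on $\mathcal{I}_{i,x}$. So restricting $\mathsf{comp}(J_{i,x})$ to a good junta assignment $A'$ (measure at least $1-\eta$ by Markov), and these $O(C)$ partner coordinates to $\emptyset$, costs only $k^{O(C)}2^{-10kC}\eta$. The restricted family keeps measure at least $1-2\eta$.

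The residual constraint between the remaining coordinates on the two sides is still $2$-to-$2$. The $p'$-coupling plus a union bound over the $k$ sets then produces a hyperedge, with no full support or invariance principle needed. Labels are decoded uniformly from $EJ$. The alternative would be a genuine Dinur--Mossel--Regev-type invariance theorem with a block cross-influence condition and no restrictions.

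\textbf{Smaller points.}
\begin{itemize}
\item Mixing in arbitrary valid patterns destroys exact product marginals. Downsampling the $p_c$-coupling by $p/p_c$ would give full support correctly.
\item Your $k\geq 4$ coupling must also allow emptying a single $A_s\cap P_i$ (and $B_r\cap Q_i$). Otherwise $\Pr[A_s\cap P_i=\emptyset]=0\neq(1-p)^2$.
\end{itemize}
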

Theorem~\ref{thm:hypergraph_IS} immediately implies Theorem~\ref{thm:hypergraph_VC_improved_factor} using the fact that the complement of an independent set is a vertex cover.

The construction in the proof of Theorem~\ref{thm:hypergraph_IS} combines ideas from~\cite{KhotRegev,DinurSafra} into the construction of~\cite{DGKR}. In~\cite{DGKR} the authors replace each vertex of the label cover instance by a copy of the $p$-biased long code for an appropriately chosen $p$, with the intention of a label being encoded by a dictatorship assignment. They then construct hyperedges that intend to check compatibility of the long-code encodings. More specifically, for each edge $(x,y)$ of the original label cover instance, they add a hyperedge between one vertex $(y,A)$ from the long-code of $y$ and $k-1$ vertices $(x,B_1),\ldots,(x,B_{k-1})$ from the long code of $x$, whenever the sets $A$, $B_1\cap\ldots\cap B_{k-1}$ contain no pair of assignments satisfying the constraint on $(x,y)$. They show a fairly elementary way to analyze this construction for $p=1-\frac{1}{k-1}-\eps$, and thereby gave their $k-1-\eps$ hardness result for vertex cover. 

The earlier work of~\cite{DinurSafra} studies the case that $k=2$, and uses a construction similar in spirit, except that the initial label cover instance can be thought of as a $2$-to-$2$ games instance (this is not quite accurate, but good enough for the sake of discussion). There, the authors manage to show a more involved analysis which works for non-trivial value of $p$. Their analysis amounts to the ability to construct, for each edge $(x,y)$ in the outer label cover instance, a probability distribution over the hyperedges, in which each vertex is $p$-biased distributed in its respective long-code. We refer the reader to~\cite{minzerbook} for a more detailed account.

The key idea in the proof of Theorem~\ref{thm:hypergraph_IS} is to split the vertices of the hyperedge more evenly between the two long codes, so as to make the construction of a coupling as in~\cite{DinurSafra} easier. Intuitively, if $k$ is thought of as a large even number and we wish to take vertices $(x,B_1),\ldots,(x,B_{k/2})$ from one long code and $(y,A_1),\ldots,(y,A_{k/2})$ from the other long code, then the condition that $B_1\cap\ldots\cap B_{k/2}$
and 
$A_1\cap\ldots\cap A_{k/2}$ do not contain a pair of labels satisfying the constraint on $(x,y)$ seems much more mild compared to the condition of~\cite{DGKR}. Indeed, it turns out that once $k\geq 4$ one can construct a suitable  probability distribution for $p=1-\frac{1}{k}-\eps$, and for $k=3$ one can do so up to $p=2-\sqrt{2}-\eps$.
\subsection{The Construction for Theorem~\ref{thm:hypergraph_IS}}

Fix $k\geq 3$ and let $p\in (0,1)$ to be determined.
Take parameters $\ell\in\mathbb{N}$ and $\eps>0$ to be determined, and take $\Psi = \{\Psi_{i,j}\}_{1\leq i<j\leq \ell}$ to be an instance from Theorem~\ref{thm:multilayered_222} with parameters $\ell$ and $\eps$, and denote its layers by $X_1,\ldots,X_{\ell}$ and its alphabet by $\Sigma$. We construct the weighted hypergraph $\mathcal{H} = (\mathcal{V},\mathcal{E})$ as follows:
\begin{enumerate}
    \item The vertex set is $\bigcup_{i=1}^{\ell} \{i\}\times X_i\times P(\Sigma)$, i.e., we replace each vertex $x\in X_i$ with a copy of $P(\Sigma)$ and append a label $i$ for convenience. We define the weight of a vertex $(i,x,A)$ to be $\frac{1}{\ell|X_i|}\mu_p(A)$.
    \item Taking $i<j$, we include in $\mathcal{H}$ the hyperedge $\{(i,x,A_1),(i,x,A_2),(j,y,B_1),\ldots,(j,y,B_{k-2})\}$ if
    \begin{enumerate}
        \item $(x,y)$ is an edge in $\Psi_{i,j}$.
        \item No pair of labels in $A_1 \cap A_2$ and $B_1\cap\ldots\cap B_{k-2}$ satisfy the constraint of $(x,y)$. 
    \end{enumerate}
\end{enumerate}
This completes the description of $\mathcal{H}$, and we next analyze its completeness and soundness.

\subsection{Completeness Analysis}
The completeness of the reduction is straightforward and is proved in the following lemma.
\begin{lemma}\label{lem:completeness}
    If $\Psi$ is as in the Yes case in Theorem~\ref{thm:multilayered_222}, then $\mathcal{H}$ contains an independent set of weight at least $p(1-\eps)$.
\end{lemma}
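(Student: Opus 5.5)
We take the dictatorship encoding of the Yes-case labeling. Let $X_i'\subseteq X_i$ and $A_i\colon X_i'\to\Sigma$ be the sets and partial labelings from the Yes case of Theorem~\ref{thm:multilayered_222}. Define
\[
\mathcal{I}=\bigcup_{i=1}^{\ell}\sett{(i,x,A)}{x\in X_i',\ A_i(x)\in A}.
\]
That is, for each labeled vertex $x$ we take the dictatorship of its label inside its copy of $P(\Sigma)$, and for unlabeled vertices we take nothing.

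\textbf{Weight.} For each $x\in X_i'$, the contribution of $\mathcal{I}$ to the copy of $x$ is $\frac{1}{\ell|X_i|}\mu_p(\{A: A\ni A_i(x)\}) = \frac{p}{\ell|X_i|}$. Summing over $x\in X_i'$ gives $\frac{p}{\ell}\cdot\frac{|X_i'|}{|X_i|}\geq \frac{p(1-\eps)}{\ell}$. Summing over the $\ell$ layers gives total weight at least $p(1-\eps)$.

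\textbf{Independence.} Suppose for contradiction that a hyperedge $\{(i,x,A_1'),(i,x,A_2'),(j,y,B_1),\ldots,(j,y,B_{k-2})\}$ with $i<j$ lies entirely in $\mathcal{I}$. All of its vertices are in $\mathcal{I}$, so $x\in X_i'$ and $y\in X_j'$. Moreover $\sigma:=A_i(x)$ lies in $A_1'\cap A_2'$, and $\tau:=A_j(y)$ lies in $B_1\cap\cdots\cap B_{k-2}$. By the definition of the hyperedge, $(x,y)$ is an edge of $\Psi_{i,j}$. Both endpoints lie in the labeled sets, so the Yes-case guarantee says the labelings satisfy this constraint, i.e.\ $(\sigma,\tau)\in\Phi_{x,y}$. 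This contradicts condition (b) of the hyperedge, which forbids any satisfying pair in $A_1'\cap A_2'$ and $B_1\cap\cdots\cap B_{k-2}$. Hence $\mathcal{I}$ is independent.

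There is no real obstacle here. The only point requiring care is that hyperedges touching $X_i\setminus X_i'$ are harmless, because $\mathcal{I}$ contains no vertices from those copies.
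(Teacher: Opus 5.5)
Your proposal is correct and matches the paper's proof: both take the dictatorship encoding $\{(i,x,A) : x\in X_i',\ A_i(x)\in A\}$ and compute its weight as $p\cdot|X_i'|/|X_i|$ averaged over layers, which is at least $p(1-\eps)$. Your explicit independence check (the labels $A_i(x)\in A_1'\cap A_2'$ and $A_j(y)\in B_1\cap\cdots\cap B_{k-2}$ would form a satisfying pair, contradicting condition (b)) is exactly the step the paper leaves implicit.
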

\begin{proof}
    Let $A_1,\ldots,A_{\ell}$ be partial labelings to $X_1',\ldots,X_{\ell}'$ satisfying all of the constraints between them, and choose 
    \[
    S = \{(i,x,A)~|~i=1,\ldots,\ell, x\in X_i', A_i(x)\in A\}.
    \]
    Then $S$ is an independent set in $\mathcal{H}$, and its weight is $\Expect{i,x\in X_i}{1_{x\in X_i'}\Prob{A\sim\mu_p}{A_i(x)\in A}}
    \geq p(1-\eps)$.
\end{proof}

\subsection{Tools for the Soundness Analysis}
In this section we establish the main new tools that are used in the soundness analysis. For $k\geq 4$, the analysis will use the following construction of a probability distribution:
\begin{lemma}\label{lem:key_coupling_4}
    Suppose $k\geq 4$, $0\leq p\leq 1-\frac{1}{k}$, let $\Sigma_1,\Sigma_2$ be alphabets of the same size, and suppose that $\Phi\subseteq \Sigma_1\times\Sigma_2$ is a $2$-to-$2$ constraint. Then there exists a probability distribution $\mathcal{D}$ over $(A_1,A_2,B_1,\ldots,B_{k-2})$ such that
    \begin{enumerate}
        \item Marginally, each one of $A_1,A_2$ is distributed according to the $p$-biased distribution over $P(\Sigma_1)$, and each one of $B_1,\ldots,B_{k-2}$ is distributed according to the $p$-biased distribution over $P(\Sigma_2)$.
        \item The sets $(A_1\cap A_2)$ and 
        $(B_1\cap\ldots\cap B_{k-2})$ do not contain any pair of labels that satisfy $\Phi$.
    \end{enumerate}
\end{lemma}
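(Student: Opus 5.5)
The plan is to reduce the statement to a single $2$-element block of $\Phi$, solve that block exactly at the extreme value $p=1-\frac1k$, and then lower $p$ by random subsampling.

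\textbf{Reductions.} Write $\Phi=\bigcup_{i=1}^m P_i\times Q_i$ with $P_i=\{x_i,x_i'\}$ and $Q_i=\{y_i,y_i'\}$. Since $\Phi$ is a union of blocks, condition 2 holds if and only if, for every $i$, either $A_1\cap A_2\cap P_i=\emptyset$ or $B_1\cap\cdots\cap B_{k-2}\cap Q_i=\emptyset$. The $p$-biased measure is a product over coordinates. So it suffices to build, for one block, a distribution on the traces $(A_1\cap P,\,A_2\cap P,\,B_1\cap Q,\ldots,B_{k-2}\cap Q)$ with two properties. First, each trace is $p$-biased on its $2$-element set. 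Second, the block condition holds. We then sample the blocks independently.

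We may also assume $p=p_0:=1-\frac1k$. Given $\mathcal D$ for $p_0$ and any $p\leq p_0$, pass each sampled set through an independent filter that keeps each element with probability $p/p_0$. Each marginal becomes $p$-biased, and since the sets only shrink, condition 2 is preserved.

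\textbf{One block at $q:=1-p_0=\frac1k$.} The guiding count is that each of the $k$ sets misses $2q$ elements of its block in expectation, for a total of $2kq=2$. So we aim for a coupling in which every realization has exactly two misses, placed to kill one side. With probability $\alpha$ we kill the $A$-side and keep all $B_j\supseteq Q$:
\begin{itemize}
\item with probability $r$, a uniformly chosen $A_a$ omits both $x$ and $x'$;
\item otherwise, an ordered pair $(a,a')$ of distinct indices in $\{1,2\}$ is chosen uniformly, $A_a$ omits $x$ and $A_{a'}$ omits $x'$.
\end{itemize}
All other sets contain the full block. With probability $1-\alpha$ we do the same on the $B$-side, with parameter $r'$ and indices in $[k-2]$, and keep $A_1,A_2\supseteq P$.

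\textbf{Matching marginals.} Every set misses at most the two elements of its block, and the roles of $x$ and $x'$ are symmetric. So the trace of a set is $p$-biased exactly when $\Pr[\text{misses both}]=q^2$ and $\Pr[\text{misses exactly } x]=q(1-q)$.
\begin{itemize}
\item For $A_1$ these read $\alpha r/2=q^2$ and $\alpha(1-r)/2=q(1-q)$. Hence $\alpha=2q$ and $r=q$.
\item For $B_j$ they read $(1-2q)r'/(k-2)=q^2$ and $(1-2q)(1-r')/(k-2)=q(1-q)$. Adding them gives $1-2q=(k-2)q$, i.e.\ $q=\frac1k$, which is exactly our value. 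Then $r'=q$.
\end{itemize}

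\textbf{Expected obstacle.} The delicate point is feasibility of these parameters. The $B$-side split requires a pair of \emph{distinct} indices in $[k-2]$ with positive probability $1-r'>0$, which forces $k-2\geq 2$. This is precisely where $k\geq4$ enters: for $k=3$ the $B$-side consists of a single set and the system above has no solution at $q=\frac13$. All other quantities are in $[0,1]$, since $\alpha=2/k\leq1$. Finally, the block condition holds in every realization, because one side always has an empty intersection with its block, and the construction does not depend on which block we are in.
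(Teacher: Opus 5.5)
Your proposal is correct and is essentially the paper's proof. The paper uses the same per-block coupling at $p=1-\frac1k$: it kills the $A$-side with probability $\frac{2}{k}$ and the $B$-side with probability $\frac{k-2}{k}$, in each case using the ``one set omits the whole block'' option with probability $\frac1k$ and the two-set split otherwise. It then reaches smaller $p$ by independent subsampling, as you do. The only difference is presentational: you derive $\alpha=2q$ and $r=r'=q$ by solving the marginal equations, whereas the paper states these parameters and says the marginals follow by direct calculation.
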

\begin{proof}
    By renaming labels we assume that $\Sigma_1=\Sigma_2=\Sigma$.
    We construct the coupling for $p=1-\frac{1}{k}$, and the result follows trivially. Indeed, given the construction for biased parameter $1-\frac{1}{k}$, for $p<1-\frac{1}{k}$ one can first sample from the distribution of $1-\frac{1}{k}$ and then take random subsets by retaining each element independently with probability $p/(1-1/k)$.

    Since $\Phi$ is a $2$-to-$2$ constraint, we may write it as $\bigcup\limits_{i=1}^{m}P_i\times Q_i$ where $\{P_i\},\{Q_i\}$ are partitions of $\Sigma$ into sets of size $2$. We now consider the following randomized process. For each $i=1,\ldots,m$, perform the following process independently:    \begin{enumerate}
        \item With probability $\frac{k-2}{k}$, take $A_{1,i}=A_{2,i} = P_i$ and then
        \begin{enumerate}
            \item With probability $1-\frac{1}{k}$, choose distinct $j_1,j_2\in [k-2]$ uniformly, take $B_{j,i}=Q_i$ for all $j\neq j_1,j_2$, take $B_{j_1,i}, B_{j_2,i}$ to be a uniformly random partition of $Q_i$ into sets of size $1$.
            \item Else, choose $j^{\star}\in [k-2]$ uniformly, set $B_{j^{\star},i}=\emptyset$ and $B_{j,i}=Q_i$ for all $j\neq j^{\star}$.
        \end{enumerate}
        \item With probability $\frac{2}{k}$, take $B_{j,i} = Q_i$ for all $j$ and then
        \begin{enumerate}
            \item With probability $\frac{k-1}{k}$ choose $A_{1,i}, A_{2,i}$ to be a uniformly random partition of $P_i$ into sets of size $1$.
            \item Else, choose $j\in \{1,2\}$ uniformly and take $A_{j,i}=P_i$ and $A_{3-j,i}=\emptyset$.
        \end{enumerate}
    \end{enumerate}
    Then take 
    \[
    A_s=A_{s,1}\cup\ldots \cup A_{s,m}\qquad
    B_r = B_{r,1}\cup\ldots\cup B_{r,m}
    \]
    for $s=1,2$ and $r=1,\ldots,k-2$. By inspecting the process, it is easy to see that $A_{1,i}\cap A_{2,i}$ and $B_{1,i}\cap\ldots\cap B_{k-2,i}$ never contain a pair of labels from $P_i$, $Q_i$ respectively, and thus the second item in the lemma holds. By direct calculation, for each $i$ and $s\in \{1,2\}$,  $A_{s,i}$ is equal to $P_i$ with probability $p^2$, 
    is a random subset of size $1$ with probability $2p(1-p)$, and is equal to $\emptyset$ with probability $(1-p)^2$. The same holds for each $B_{r,i}$ relative to $Q_i$. This implies the first item in the lemma.
\end{proof}

For $k=3$, the construction is a bit different, and is given in the following lemma.
\begin{lemma}\label{lem:key_coupling_3}
    Suppose $k=3$, $0\leq p\leq 2-\sqrt{2}$, and let $\Sigma_1,\Sigma_2$ be alphabets of the same size, and
    $\Phi\subseteq\Sigma_1\times\Sigma_2$ be a $2$-to-$2$ constraint. Then there exists a
    probability distribution $\mathcal{D}$ over $(A_1,A_2,B)$ such that
    \begin{enumerate}
        \item Marginally, each one of $A_1,A_2$ is distributed
        according to the $p$-biased distribution over $P(\Sigma_1)$, and $B$ is distributed according to the $p$-biased distribution over $P(\Sigma_2)$.
        \item The sets $A_1\cap A_2$ and $B$ do not contain any pair of labels satisfying $\Phi$.
    \end{enumerate}
\end{lemma}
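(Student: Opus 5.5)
The plan is to mimic the proof of Lemma~\ref{lem:key_coupling_4}. First, by renaming labels, I assume $\Sigma_1=\Sigma_2=\Sigma$. Next I reduce to the extremal value $p_0=2-\sqrt{2}$. Given a coupling at $p_0$, I obtain one for any $p<p_0$ by keeping each element of each set independently with probability $p/p_0$. This preserves $p$-biased marginals, and it can only shrink $A_1\cap A_2$ and $B$, so the second item is preserved.

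Write $\Phi=\bigcup_{i=1}^m P_i\times Q_i$ with $P_i=\{x_i,x_i'\}$ and $Q_i=\{y_i,y_i'\}$. I construct the parts $A_{1,i},A_{2,i}\subseteq P_i$ and $B_i\subseteq Q_i$ independently over $i$, and then take unions. It then suffices to satisfy two conditions in each block. First, $B_i\neq\emptyset$ must force $A_{1,i}\cap A_{2,i}=\emptyset$. Second, each of $A_{1,i},A_{2,i},B_i$ must equal the full pair with probability $p^2$, a uniformly random singleton with probability $2p(1-p)$, and $\emptyset$ with probability $(1-p)^2$.

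The counting heuristic that motivates the construction is as follows. Since $B_i\neq\emptyset$ must occur with probability $r=1-(1-p)^2=2p-p^2$, on that event $A_{1,i},A_{2,i}$ are disjoint. Hence $|A_{1,i}|+|A_{2,i}|\leq 2$ there, and each $A_{s,i}$ has expected size at most $1$ by symmetry. On the complementary event (probability $(1-p)^2$) we may take $A_{1,i}=A_{2,i}=P_i$. The marginal needs $\mathbb{E}|A_{s,i}|=2p$, which is feasible only if $r+2(1-p)^2\geq 2p$. This inequality is equivalent to $p^2-4p+2\geq 0$, i.e.\ $p\leq 2-\sqrt2$, which explains the threshold.

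The concrete process at $p=p_0$ is as follows.
\begin{enumerate}
\item With probability $(1-p)^2$, set $B_i=\emptyset$ and $A_{1,i}=A_{2,i}=P_i$.
\item Otherwise, with probability $r$:
\begin{enumerate}
\item Choose $B_i=Q_i$ with conditional probability $p^2/r$, and a uniform singleton otherwise.
\item Independently, choose $(A_{1,i},A_{2,i})$ to be $(P_i,\emptyset)$ or $(\emptyset,P_i)$ with probability $\alpha/2$ each.
\item With the remaining probability $\beta=1-\alpha$, choose $(A_{1,i},A_{2,i})$ to be a uniformly random partition of $P_i$ into two singletons.
\end{enumerate}
\end{enumerate}
The marginal of $B_i$ is correct by construction, and the second item of the lemma holds because in case 2 the sets $A_{1,i},A_{2,i}$ are disjoint.

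For the $A$-marginals I need three equations:
\[
(1-p)^2+r\alpha/2=p^2,\qquad r\beta=2p(1-p),\qquad r\alpha/2=(1-p)^2 .
\]
Subtracting the first and third gives $p^2=2(1-p)^2$, i.e.\ $p=\sqrt2(1-p)$, which is exactly $p=2-\sqrt2$. The middle equation, together with $\alpha+\beta=1$, follows from $r=p^2+2p(1-p)-\ldots$, or rather from summing the probabilities, so it is consistent automatically. Symmetry of the uniform choices gives uniform singletons in the marginals.

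The main obstacle is checking that these equations are simultaneously solvable with $\alpha,\beta\in[0,1]$, which is where the exact value $2-\sqrt2$ enters. Beyond that, the argument is routine bookkeeping followed by the downsampling reduction.
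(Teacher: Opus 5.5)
Your proposal is correct and is the paper's construction. At $p=2-\sqrt{2}$ one has $r=1-(1-p)^2=2(1-p)$ and $\alpha=2(1-p)^2/r=p^2/r=1-p$. So your case~2 coincides with the paper's: a singleton split of $P_i$ (respectively $Q_i$) with probability $p$, otherwise the full/empty option. This computation also settles the feasibility check $\alpha\in[0,1]$, which you flagged but did not carry out; your counting argument that $2-\sqrt{2}$ is the largest $p$ admitting such a blockwise coupling is a nice addition not in the paper.
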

\begin{proof}
    As before we assume that $\Sigma_1=\Sigma_2=\Sigma$, and it suffices to prove the lemma for $p=2-\sqrt{2}$. We use the same $P_i$ and $Q_i$ notations as before, and for each $i=1,\ldots,m$ independently perform the following randomized process:
    \begin{enumerate}
        \item With probability $(1-p)^2$, take    $A_{1,i}=A_{2,i}=P_i$ and $B_{i}=\emptyset$.

        \item With probability $2(1-p)$, perform the following two choices
        independently.
        \begin{enumerate}
            \item With probability $p$, take $A_{1,i},A_{2,i}$
            to be a uniformly random partition of $P_i$ into sets of
            size $1$, and else choose $s\in\{1,2\}$
            uniformly and take
            $A_{s,i}=P_{i}$, $A_{3-s,i}=\emptyset$.

            \item With probability $p$ take $B_{i}$ to be a
            uniformly random subset of $Q_i$ of size $1$, and else take it to be $Q_i$.
        \end{enumerate}
    \end{enumerate}
    This process is possible to execute because $(1-p)^2+2(1-p) = (1-p)(3-p)=(\sqrt{2}-1)(1+\sqrt{2})=1$.
    We then take $A_s = A_{s,1}\cup\ldots\cup A_{s,m}$ for $s=1,2$ and 
    $B = B_{1}\cup\ldots\cup B_{m}$. The two items in the lemma follow from a direct inspection of the process.
\end{proof}

\subsection{Soundness Analysis}
We now analyze the soundness of the reduction. Define $p_c = 1-\frac{1}{k}$ if $k\geq 4$ and $p_c = 2-\sqrt{2}$ if $k=3$. In this section we prove the soundness of the reduction, which amounts to the following lemma.
\begin{lemma}\label{lem:soundness_hypergraph}
    For all $\eta\in (0,\frac{1}{100 k})$, for sufficiently large $\ell$, sufficiently small $\lambda>0$ and sufficiently small $\eps>0$, if $p= p_c-\eta$, $\Psi$ is $(1/2,\lambda)$-weakly dense and $\mathsf{val}(\Psi_{i,j})\leq \eps$ for all $i<j$, then the largest independent set in $\mathcal{H}$ has weight at most $\eta$.
\end{lemma}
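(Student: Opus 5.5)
We argue by contradiction, following the template of \cite{DGKR,DinurSafra}. Suppose $S\subseteq\mathcal{V}$ is an independent set of weight more than $\eta$. For $i\in[\ell]$ and $x\in X_i$ write $\mathcal{F}_{i,x}=\sett{A}{(i,x,A)\in S}$. Each $\mathcal{F}_{i,x}$ can be assumed monotone: adding supersets keeps $S$ independent, since the hyperedge condition is downward closed in each coordinate. By averaging, at least $\eta\ell/2$ layers $i$ have $\Expect{x\in X_i}{\mu_p(\mathcal{F}_{i,x})}\geq\eta/2$. In each such layer, the set $T_i$ of $x$ with $\mu_p(\mathcal{F}_{i,x})\geq\eta/4$ has density at least $\eta/4$.

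Next we regularize influences using Lemma~\ref{lem:russo_margulis}. Fix a small $\delta>0$ with $p+\delta<p_c$ (possible since $p=p_c-\eta$). Since $\mu_q(\mathcal{F}_{i,x})\leq 1$ is increasing in $q$, the average over $q\in[p,p+\delta]$ of $I[\mathcal{F}_{i,x};\mu_q]$ is at most $1/\delta$. So for most $x$ there is a $q$ in this window where the total influence is $O(1/\delta)$. To make a single $q$ work for all vertices, discretize the window and pigeonhole over layers and vertices. Let $q$ be the chosen bias and $K=O(\eta^{-1}\delta^{-1})$. Then for $\Omega(1)$-density subsets $T_i'\subseteq T_i$ in many layers, we have $\mu_q(\mathcal{F}_{i,x})\geq\eta/4$ and $I[\mathcal{F}_{i,x};\mu_q]\leq K$. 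Lemma~\ref{lem:friedgut} now gives each such $x$ a set $J_{i,x}\subseteq\Sigma$ of size $M=2^{O(K)}$, independent of $|\Sigma|$, such that $\mathcal{F}_{i,x}$ is $\eps'$-close under $\mu_q$ to a $J_{i,x}$-junta. Since $\ell$ can be taken large compared to $2/\lambda$ with $\lambda\leq\eta/8$, the weak density property yields $i<j$ with
\[
\Prob{(x,y)\in E_{i,j}}{x\in T_i',y\in T_j'}\geq \tfrac12\lambda^2 .
\]

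The heart of the argument is the claim that for such an edge $(x,y)$, some $\sigma\in J_{i,x}$ and $\tau\in J_{j,y}$ satisfy $\Phi_{x,y}$. Suppose not. Apply Lemma~\ref{lem:key_coupling_4}, or Lemma~\ref{lem:key_coupling_3} when $k=3$, with bias $q\leq p_c$ to $\Phi_{x,y}$, and sample $(A_1,A_2,B_1,\ldots,B_{k-2})\sim\mathcal{D}$. The coupling is built independently across the $2$-to-$2$ blocks $P_r\times Q_r$. Because $J_{i,x}$ and $J_{j,y}$ contain no satisfying pair, no block meets both of them, so the blocks meeting $J_{i,x}$ are disjoint from those meeting $J_{j,y}$. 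Hence the restrictions $(A_1\cap J_{i,x},A_2\cap J_{i,x})$ are independent of $(B_1\cap J_{j,y},\ldots)$, and each is $q$-biased.

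It remains to show that $A_1,A_2\in\mathcal{F}_{i,x}$ and all $B_r\in\mathcal{F}_{j,y}$ occur simultaneously with positive probability, which contradicts independence of $S$ because every sample of $\mathcal{D}$ is a hyperedge. The obstacle is that $A_1,A_2$ are correlated on $J_{i,x}$, and the junta approximation only controls marginals. We handle this in two steps:
\begin{enumerate}
    \item Use the slack $q<p_c$ by sampling from the $p_c$-coupling and independently thinning, which gives the coupling a small independent-noise component. Combined with monotonicity, this yields
    \[
    \Pr\Big[\textstyle\bigwedge_s A_s\in\mathcal{F}_{i,x}\Big]\geq c(\eta,q,M)>0.
    \]
    A clean route is to find $T\subseteq J_{i,x}$ with $\mathcal{F}_{i,x,\,J_{i,x}\to T}$ of measure $\geq\eta/8$ and check that the coupling puts all $A_s\cap J_{i,x}\supseteq T$ with probability at least $(\text{const})^{M}$.
    \item Handle the coordinates outside the juntas via the restriction fact: restricting $M$ coordinates inflates influences by at most $\min(q,1-q)^{-M}$, so the restricted families still have measure $\gtrsim\eta$. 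A union/Harris-type bound then controls the outside parts, taking $\eps'\ll c$.
\end{enumerate}
I expect this positive-probability estimate to be the main technical step.

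Given the claim, we decode by assigning each $x\in T_i'$ a uniformly random label from $J_{i,x}$. Every good edge is then satisfied with probability at least $1/M^2$, so $\mathsf{val}(\Psi_{i,j})\geq\lambda^2/(2M^2)>\eps$ once $\eps$ is chosen small enough in terms of $\eta,\lambda,M$. This is the desired contradiction.
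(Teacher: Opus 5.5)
Your outline (monotone closure, one bias via Russo--Margulis, Friedgut juntas, weak density, random decoding) matches the paper. However, the step you flag as the main one does not go through as proposed, and that step is where the paper's key idea sits. After restricting the junta coordinates, your families only have measure $\gtrsim\eta$, so a union bound over the $k$ sets gives nothing. A Harris-type bound is also unavailable, because the coupling is negatively correlated. In Lemmas~\ref{lem:key_coupling_4} and~\ref{lem:key_coupling_3} one always has $A_1\cup A_2=\Sigma$, and $\Pr[\sigma\in A_1\cap A_2]$ equals $\frac{k-2}{k}$, resp.\ $(1-p_c)^2$, both smaller than $p_c^2$. Thinning to bias $q$ does not change the sign.

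The other reading of your sketch is to lower-bound $\Pr[A_1,A_2\in\mathcal{J}_{i,x},\ B_r\in\mathcal{J}_{j,y}\ \forall r]\geq c^{M}$ using the juntas, and then absorb the $k\eps'$ approximation errors. This is circular: the junta size $M$ grows with $1/\eps'$ (Friedgut gives $2^{O(K/\eps')}$), so $c^{M}\ll\eps'$ however $\eps'$ is chosen. The same circularity affects your step 2. Coordinates outside $J_{i,x}$ are only guaranteed influence $O(\eps')$, while restricting about $2M$ coordinates can inflate influences by $\min(q,1-q)^{-2M}$.

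The missing idea is to make every family entering the union bound have measure close to $1$, with error terms independent of the junta size. Let $\mathsf{comp}(J_{i,x})$ be the union of the blocks $P_r$ meeting $J_{i,x}$.

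1. A junta restricted on its junta coordinates is all-or-nothing. So from $\mu_{q}(\mathcal{F}_{i,x}\Delta\mathcal{J}_{i,x})\leq\eta^2/8$ and Markov, one gets $A'\subseteq\mathsf{comp}(J_{i,x})$ whose restriction has measure $\geq 1-\eta$, not merely $\eta/8$.
2. The paper then also fixes as absent, in the $A$'s, the coordinates of every block $P_r$ whose partner $Q_r$ meets $J_{j,y}$. This costs only $O(\eta)$ because the claim is proved for an \emph{extended} junta $EJ_{i,x}$. It adds all coordinates of influence at least $2^{-10kC}\eta$, a threshold chosen after the junta size $C$ is fixed, and this is what breaks the circularity. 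The same is done for $y$.
3. On all junta blocks the hyperedge is now fixed deterministically. On blocks meeting $J_{i,x}$ both $A$'s restrict to $A'$ and every $B_r$ is empty, and vice versa for $J_{j,y}$.
4. The remaining blocks still form a $2$-to-$2$ constraint, and the coupling is applied only there. Each of the $k$ restricted families has measure $\geq 1-2\eta$, so a union bound gives probability $\geq 1-2k\eta>0$.

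Accordingly, decoding must use $EJ_{i,x}$, whose size is still bounded in terms of $\eta$ and $k$. Your claim for plain $J_{i,x}$ is not established by this argument.
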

The rest of this section is devoted to the proof of Lemma~\ref{lem:soundness_hypergraph}, and we assume towards contradiction that $\mathcal{I}\subseteq \mathcal{V}$ is an independent set of weight at least $\eta$. We pick $\ell = \frac{128}{\eta^2}$ and $\lambda = \frac{\eta}{8}$.
Define the up-closure of $\mathcal{I}$ to be
\[
\mathcal{I}^{\uparrow} = \sett{(i,x,B)}{\exists A\subseteq B, (i,x,A)\in \mathcal{I}}.
\]
It is easy to see that $\mathcal{I}^{\uparrow}$ is also an independent set and its weight is at least as large as that of $\mathcal{I}$. Thus, we may work with $\mathcal{I}^{\uparrow}$ instead of $\mathcal{I}$, and for notational convenience we assume that $\mathcal{I} = \mathcal{I}^{\uparrow}$ to begin with.

For each $i=1,\ldots,\ell$ and $x\in X_i$, define
\[
\mathcal{I}_{i,x} = \sett{A\subseteq \Sigma}{(i,x,A)\in \mathcal{I}}.
\]
Then 
\[
\Expect{i,x\in X_i}{\mu_p(\mathcal{I}_{i,x})}
\geq \eta,
\]
so by an averaging argument for at least $\frac{\eta}{2}$ fraction of $(i,x)$ it holds that $\mu_p(\mathcal{I}_{i,x})\geq\frac{\eta}{2}$, and we refer to such $(i,x)$ as good. Next, note that by the mean value theorem, there exists $p'\in (p,p+\eta)$ such that
\[
\Expect{i,x\in X_i}{\frac{d\mu_z(\mathcal{I}_{i,x})}{dz}(p')}
=
\frac{d\Expect{i,x\in X_i}{\mu_z(\mathcal{I}_{i,x})}}{dz}(p')=
\frac{\Expect{i,x\in X_i}{\mu_{p+\eta}(\mathcal{I}_{i,x})}
-\Expect{i,x\in X_i}{\mu_p(\mathcal{I}_{i,x})}}{\eta}\leq \frac{1}{\eta}.
\]
We fix this $p'$. Note that since $\mathcal{I}$ is upward closed, $\mathcal{I}_{i,x}$ is monotone for all $i, x$, so by Lemma~\ref{lem:russo_margulis} we conclude that 
$\Expect{i,x\in X_i}{I[\mathcal{I}_{i,x};\mu_{p'}]}\leq \frac{1}{\eta}$. Markov's inequality now gives that for at least $1-\frac{\eta}{4}$ fraction of $(i,x)$ we have that 
$I[\mathcal{I}_{i,x};\mu_{p'}]\leq \frac{4}{\eta^2}$. 
Using Lemma~\ref{lem:russo_margulis} again, it follows that 
$z\rightarrow \mu_z(\mathcal{I}_{i,x})$ is monotone, and hence
$\mu_{p'}(\mathcal{I}_{i,x})\geq \mu_p(\mathcal{I}_{i,x})
$.
Overall, we get that for at least $\frac{\eta}{4}$ fraction of $(i,x)$ we have
\begin{equation}\label{eq:prop_of_subfam}
\mu_{p'}(\mathcal{I}_{i,x})\geq \frac{\eta}{2},
\qquad 
I[\mathcal{I}_{i,x};\mu_{p'}]\leq \frac{4}{\eta^2},
\end{equation}
and we refer to such $(i,x)$ as great.

Applying Lemma~\ref{lem:friedgut}, for each great $(i,x)$ we may find $J_{i,x}\subseteq \Sigma$ of size at most $C=C(\eta)$ and a $J_{i,x}$ junta $\mathcal{J}_{i,x}\subseteq P(\Sigma)$ such that 
\begin{equation}\label{eq:prop_of_fried}
\mu_{p'}(\mathcal{I}_{i,x}\Delta \mathcal{J}_{i,x})\leq \frac{\eta^2}{8}.
\end{equation}
\begin{definition}
    We define the extended junta of $(i,x)$ by
    \[
    EJ_{i,x} = J_{i,x}\cup\sett{\sigma\in\Sigma}{I_{\sigma}[\mathcal{I}_{i,x};\mu_{p'}]\geq 2^{-10k C}\eta}.
    \]
\end{definition}
We note that $|EJ_{i,x}|\leq C+ \frac{2^{10kC}}{\eta} I[\mathcal{I}_{i,x};\mu_{p'}]\leq \frac{8\cdot 2^{10k C}}{\eta^3}$.

\begin{claim}\label{claim:extended_juntas_intersect}
    Suppose that $(x,y)$ is an edge in $\Psi_{i,j}$, and that $(i,x)$ and $(j,y)$ are great. Then 
    $EJ_{i,x}$ and $EJ_{j,y}$
    contain a pair of labels that satisfy the constraint $\Phi_{x,y}$.
\end{claim}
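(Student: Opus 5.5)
We argue by contradiction: suppose no pair in $EJ_{i,x}\times EJ_{j,y}$ satisfies $\Phi_{x,y}$. We will then produce a hyperedge of $\mathcal{H}$ all of whose vertices lie in $\mathcal{I}$, contradicting independence. The hyperedge comes from the coupling $\mathcal{D}$ of Lemma~\ref{lem:key_coupling_4} (or Lemma~\ref{lem:key_coupling_3} for $k=3$), applied at bias $p'$. This is legitimate because $p'<p+\eta=p_c$. The process in $\mathcal{D}$ is a product over the blocks $P_t\times Q_t$ of $\Phi_{x,y}=\bigcup_t P_t\times Q_t$, so we may condition on the blocks touching the extended juntas and keep full independence elsewhere.

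\textbf{Step 1: the restricted sample.} Let $T$ be the set of indices $t$ with $P_t\cap EJ_{i,x}\neq\emptyset$ or $Q_t\cap EJ_{j,y}\neq\emptyset$. Then $|T|\leq 2\cdot\frac{8\cdot 2^{10kC}}{\eta^3}$, and set $U=\bigcup_{t\in T}P_t$ and $W=\bigcup_{t\in T}Q_t$. By assumption no $\sigma\in EJ_{i,x}$ and $\tau\in EJ_{j,y}$ share a block. Hence on each block $t\in T$ at most one side meets its extended junta.

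On $T$ we fix the sample deterministically. For a block meeting $EJ_{i,x}$ we put $A_{1,t}=A_{2,t}=P_t$; for a block meeting $EJ_{j,y}$ we put $B_{r,t}=Q_t$ for every $r$. The remaining coordinates of the block are set to $\emptyset$, so each block still respects $\Phi$. This choice makes $A_s\cap J_{i,x}$ contain the full set $J_{i,x}$, and likewise for $B_r$ on $J_{j,y}$. Outside $T$ we sample each block independently from $\mathcal{D}$. The conditions defining hyperedges then hold, so $\{(i,x,A_1),(i,x,A_2),(j,y,B_1),\dots\}$ is a hyperedge, with $A_s=U_s\cup A_s'$ and $B_r=W_r\cup B_r'$ where the primed parts are $p'$-biased on $\Sigma\setminus U$ and $\Sigma\setminus W$.

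\textbf{Step 2: each vertex lies in $\mathcal{I}$ with probability above $1-1/k$.} It suffices to show $\Pr[A_s\notin\mathcal{I}_{i,x}]<1/k$ for each $s$, and similarly for each $B_r$, and then take a union bound over the $k$ vertices. By monotonicity, filling $J_{i,x}$ completely gives the best junta value: $\mathcal{J}_{i,x}$ restricted to $J_{i,x}\to J_{i,x}$ has measure at least $\mu_{p'}(\mathcal{J}_{i,x})$. However, $\mathcal{J}_{i,x}$ need not be monotone, so we first replace it by the $J_{i,x}$-junta $\mathcal{J}'$ defined by majority, namely $A\cap J\in\mathcal{J}'$ iff $\mu(\mathcal{I}_{J\to A\cap J})\geq 1/2$. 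This $\mathcal{J}'$ is monotone because $\mathcal{I}$ is, and it is still $O(\eta^2)$-close to $\mathcal{I}_{i,x}$. Since $\mu_{p'}(\mathcal{I}_{i,x})\geq\eta/2$, some restriction of $J$ lies in $\mathcal{J}'$, and by monotonicity so does the full restriction $J\to J$. Thus $\mu(\mathcal{I}_{i,x,J\to J})\geq 1/2$, and this is the step needing the most care.

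\textbf{Step 3: boosting to $1-1/k$.} We need $\mu(\mathcal{I}_{J\to J})$ close to $1$. The plan is to rewrite the argument using $\mathcal{I}_{J\to J}$ directly, together with the closeness bound $\mu_{p'}(\mathcal{I}\Delta\mathcal{J})\leq\eta^2/8$. Averaged over restrictions, $\mathcal{I}_{J\to S}$ is within $\eta^2 p'^{-C}$ of constant for most $S$, and we pick the junta $\mathcal{J}_{i,x}$ to be $\mathcal{I}$-majority so that $\mathcal{J}_{J\to J}$ is all or nothing. Since $J\to J$ has probability $p'^{|J|}$, a crude bound gives $\mu(\mathcal{I}_{J\to J})\geq 1-\eta^2 p'^{-C}/8$, which is insufficient unless $\eta^2$ beats $p'^{-C}$. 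Instead we use the monotone majority: the restriction to $J\to J$ of $\mathcal{I}$ dominates every other restriction, so its measure is at least the average, hence at least $\eta/2$ only. So closeness to $1$ must come from elsewhere.

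We get it from the remaining restriction $U\setminus J$, which has size at most $|EJ|$. Every coordinate of $EJ\setminus J$ and every coordinate outside $EJ$ has influence below $2^{-10kC}\eta$ in $\mathcal{I}_{i,x}$, even after the restrictions, which cost a factor $(\min(p',1-p'))^{-|T|}$ per influence. These small influences, combined with $\mathcal{I}$ being $\eta^2/8$-close to a $J$-junta, should show that $\mathcal{I}_{U\to U_s}$ is essentially the $J$-junta's value, which is $1$ whenever the junta is nonempty on full $J$. The main obstacle is making this quantitative. Precisely, we must show that the low-influence coordinates in $U\setminus J$ and the large restriction of size $|T|$ do not destroy the junta approximation, and that the error stays below $1/k$. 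The threshold $2^{-10kC}$ is chosen exactly to absorb the $p'^{-O(|J|)}$ losses.
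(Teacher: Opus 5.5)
There is a genuine gap, and it is in Step~1, not in the bookkeeping you postpone to Step~3. Write $\Phi_{x,y}$ via $2$-to-$1$ maps $\pi_1,\pi_2$. You fix \emph{every} block meeting $EJ_{i,x}$ or $EJ_{j,y}$, and this forces $A_1,A_2$ to avoid all of $\pi_1^{-1}(\pi_2(EJ_{j,y}))$. That set can have size about $2\card{EJ_{j,y}}\approx 16\cdot 2^{10kC}/\eta^3$. Influences below $2^{-10kC}\eta$ cannot pay for zeroing out that many coordinates. The inflation factor $(\min(p',1-p'))^{-\card{T}}$ you mention is exponential in $\card{EJ}$, not in $\card{J}$, so the threshold does not absorb it; your last sentence conflates $\card{T}$ with $\card{J}$.

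In fact the bound you target in Step~2 does not follow from what you are using: greatness, the junta approximation, the influence threshold and the negated conclusion. Here is a witness.
\begin{itemize}
\item Let $s$ be the largest integer with $(1-p')^{s-1}\geq 2^{-10kC}\eta$. Let $S_0$ contain one element from each of $s$ distinct blocks $Q_t$.
\item Take $\mathcal{I}_{j,y}=\{B : B\cap S_0\neq\emptyset\}$ and $\mathcal{I}_{i,x}=\{A : A\cap \pi_1^{-1}(\pi_2(S_0))\neq\emptyset\}$. Both families are great.
\item On $\mathcal{I}_{j,y}$, each element of $S_0$ has influence $(1-p')^{s-1}\geq 2^{-10kC}\eta$, so $S_0\subseteq EJ_{j,y}$.
\item On $\mathcal{I}_{i,x}$, every influence is $(1-p')^{2s-1}<2^{-10kC}\eta$. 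The complement has measure $(1-p')^{2s}<\eta^2/8$, so $J_{i,x}=\emptyset$ is a legitimate output of Lemma~\ref{lem:friedgut}, and hence $EJ_{i,x}=\emptyset$.
\end{itemize}
Your contradiction hypothesis therefore holds. Yet Step~1 makes $A_s\cap\pi_1^{-1}(\pi_2(S_0))=\emptyset$, so $A_s\notin\mathcal{I}_{i,x}$ with probability $1$.

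The paper's remedy is to restrict only $O(C)$ coordinates. It takes $D_x=\mathsf{comp}(J_{i,x})\cup\pi_1^{-1}(\pi_2(J_{j,y}))$ and symmetrically $D_y$, and leaves all blocks meeting $EJ\setminus J$ to the coupling.
\begin{itemize}
\item On $\mathsf{comp}(J_{i,x})$ it uses a restriction $A'$ with $\mu_{p'}((\mathcal{I}_{i,x})_{\mathsf{comp}(J_{i,x})\to A'})\geq 1-\eta$. This exists because junta restrictions are identically $0$ or $1$. With probability at least $\eta/4$ over $A'\subseteq J_{i,x}$, the restriction of $\mathcal{J}_{i,x}$ is everything. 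By Markov, the restricted discrepancy with $\mathcal{I}_{i,x}$ exceeds $\eta$ with probability at most $\eta/8$.
\item This is also the idea missing from your Step~3. Monotonicity makes the full restriction $J\to J$ dominate every restriction, so its measure is at least the \emph{maximum}, hence at least $1-\eta$, not merely the average.
\item The extended junta is used only for the at most $2C$ coordinates of $\pi_1^{-1}(\pi_2(J_{j,y}))$ that are set to be absent. They lie outside $EJ_{i,x}$, so each has influence below $2^{-10kC}\eta$. Restricting $O(C)$ coordinates inflates this by at most $k^{O(C)}$, so the total loss is at most $\eta$.
\item Do the same on the $y$ side, then apply Lemma~\ref{lem:key_coupling_4} or~\ref{lem:key_coupling_3} at bias $p'$ to the residual $2$-to-$2$ constraint between $\Sigma\setminus D_x$ and $\Sigma\setminus D_y$. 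All $k$ sets then land in $\mathcal{I}$ with probability at least $1-2k\eta>0$.
\end{itemize}
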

\begin{proof}
    Assume for contradiction this is not the case. It will be convenient to think of the $2$-to-$2$ constraint as defined by a pair of $2$-to-$1$ maps $\pi_1\colon \Sigma\to \Gamma$, 
    $\pi_2\colon \Sigma\to\Gamma$, where $|\Gamma|=|\Sigma|/2$ so that
    \[
    \Phi_{x,y}=\sett{(\sigma,\sigma')}{\pi_1(\sigma)=\pi_2(\sigma')}.
    \]
    Thus, the assumption towards contradiction can be written as $\pi_1(EJ_{i,x})\cap \pi_2(EJ_{j,y}) = \emptyset$. We will show that this implies that $\mathcal{I}$ is not an independent set, and for that we will construct $A_1,A_2\in \mathcal{I}_{i,x}$ and 
    $B_1,\ldots,B_{k-2}\in\mathcal{I}_{j,y}$ that together form a hyperedge.

    First, combining~\eqref{eq:prop_of_subfam} and~\eqref{eq:prop_of_fried} we conclude that $\mu_{p'}(\mathcal{J}_{i,x})\geq \frac{\eta}{4}$, so sampling $A'\subseteq J_{i,x}$ from the $p'$-biased measure, we have that $\mu_{p'}((\mathcal{J}_{i,x})_{J_{i,x}\rightarrow A'})=1$ with probability at least $\frac{\eta}{4}$. By~\eqref{eq:prop_of_fried} and Markov's inequality, we get that 
    $\mu_{p'}((\mathcal{I}_{i,x}\Delta \mathcal{J}_{i,x})_{J_{i,x}\rightarrow A'})\leq \eta$ with probability at least $1-\frac{\eta}{8}$, and together we get that with probability at least $\frac{\eta}{8}$ it holds that 
    \[
    \mu_{p'}((\mathcal{I}_{i,x})_{J_{i,x}\rightarrow A'})\geq 1-\eta.
    \]
    Defining $\mathsf{comp}(J_{i,x}) = \pi_1^{-1}(\pi_1(J_{i,x}))$ and 
    $\mathsf{comp}(J_{j,y}) = \pi_2^{-1}(\pi_2(J_{j,y}))$, it follows we may find $A'\subseteq \mathsf{comp}(J_{i,x})$ such that 
    $\mu_{p'}((\mathcal{I}_{i,x})_{\mathsf{comp}(J_{i,x})\rightarrow A'})\geq 1-\eta$. As $J_{j,y}\subseteq EJ_{j,y}$, the premise implies that $\pi_{1}^{-1}(\pi_2(J_{j,y}))\cap EJ_{i,x}=\emptyset$, so elements in $\pi_1^{-1}(\pi_2(J_{j,y}))$ have at most $2^{-10k C}\eta$ influence on $\mathcal{I}_{i,x}$.
    Thus, they have at most 
    \[
    k^{\card{\mathsf{comp}(J_{i,x})}}2^{-10k C}\eta\leq k^{2C}2^{-10k C}\eta
    \leq 2^{-7kC}\eta
    \] 
    influence on $(\mathcal{I}_{i,x})_{\mathsf{comp}(J_{i,x})\rightarrow A'}$. Conditioning on any additional coordinate may increase the individual influences by a factor of $k$ at most. It follows that
    \begin{align}\label{eq:measure_of_x}
    \mu_{p'}((\mathcal{I}_{i,x})_{\mathsf{comp}(J_{i,x})\cup \pi_{1}^{-1}(\pi_2(J_{j,y})) \rightarrow A'})
    &\geq 1-\eta-|\pi_{1}^{-1}(\pi_2(J_{j,y}))|k^{|\pi_{1}^{-1}(\pi_2(J_{j,y}))|}2^{-7k C}\eta \notag\\
    &\geq 
    1-\eta-2 C\cdot 2^{-4kC}\eta \notag\\
    &\geq 1-2\eta.
    \end{align}
    Similarly, for $(j,y)$ we may find $B'\subseteq \mathsf{comp}(J_{j,y})$ such that
    \begin{equation}\label{eq:measure_of_y}
    \mu_{p'}((\mathcal{I}_{j,y})_{\mathsf{comp}(J_{j,y})\cup \pi_{2}^{-1}(\pi_1(J_{i,x})) \rightarrow B'})
    \geq 1-2\eta.
    \end{equation}  
    Define $D_x = \mathsf{comp}(J_{i,x})\cup \pi_{1}^{-1}(\pi_2(J_{j,y}))$ and 
    $D_y = \mathsf{comp}(J_{j,y})\cup \pi_{2}^{-1}(\pi_1(J_{i,x}))$. Note that the restriction of $\pi_1$ from $\Sigma\setminus D_x$ to $\Gamma\setminus \pi_1(D_x)$, denoted by $\pi_1'$, is a $2$-to-$1$ map, that the restriction of $\pi_2$ from $\Sigma\setminus D_y$ to $\Gamma\setminus \pi_2(D_y)$, denoted by $\pi_2'$, is a $2$-to-$1$ map, and $\pi_1(D_x) = \pi_2(D_y)$. Thus, together $\pi_1'$ and $\pi_2'$ define a $2$-to-$2$ constraint between $\Sigma\setminus D_x$ and $\Sigma\setminus D_y$. Applying Lemmas~\ref{lem:key_coupling_4} or~\ref{lem:key_coupling_3} with parameter $p'$ we may find a probability distribution $(A_1,A_2,B_1,\ldots,B_{k-2})$ as therein. By the marginals property and~\eqref{eq:measure_of_x},~\eqref{eq:measure_of_y} we get that
    \[
    A'\cup A_1, A'\cup A_2\in \mathcal{I}_{i,x},
    \qquad
    B'\cup B_1,\ldots,B'\cup B_{k-2}\in\mathcal{I}_{j,y}
    \]
    with probability at least $1-2k\eta>0$, and we fix such $(A_1,A_2,B_1,\ldots,B_{k-2})$. By Lemmas~\ref{lem:key_coupling_4} or~\ref{lem:key_coupling_3} the sets $A_1\cap A_2$ and $B_1\cap\ldots\cap B_{k-2}$ do not contain a pair of labels satisfying the $2$-to-$2$ constraint between $\Sigma\setminus D_x$ and $\Sigma\setminus D_y$. By construction, $A'$ and $B'$ do not contain a pair of labels satisfying the $2$-to-$2$ constraint between $D_x$ and $D_y$. It follows that there is a hyperedge in $\mathcal{H}$ between the vertices 
    \[
    (i,x,A'\cup A_1),
    (i,x,A'\cup A_2),
    (j,y,B'\cup B_1),
    \ldots,
    (j,y,B'\cup B_{k-2}),
    \]
    and contradiction.
\end{proof}
We now finish the soundness analysis. Since at least $\frac{\eta}{4}$ fraction of $(i,x)$ are great, it follows that for at least $\frac{\eta}{8}$ fraction of layers $i$, we have that the fraction of $x$ such that $(i,x)$ is great is at least $\frac{\eta}{8}$. Let $I$ be the set of these layers, and for each $i\in I$ let
\[
X_i'=\sett{x\in X_i}{(i,x)\text{ is great}}.
\]
By the weak density, we may find $i<j$ in $I$ such that at least $(\eta/16)^{2}$ fraction of the edges in $\Psi_{i,j}$ are between $X_{i}'$ and $X_{j}'$. For each $x\in X_i'$, choose its label $A(x)$ uniformly from $EJ_{i,x}$, and similarly for each $y\in X_j'$. Then by Claim~\ref{claim:extended_juntas_intersect}, it follows that the probability that the chosen labels satisfy the constraint $\Phi_{x,y}$ is at least
\[
\frac{1}{\card{EJ_{i,x}}\card{EJ_{j,y}}}
\geq \left(\frac{\eta^3}{8\cdot 2^{10kC}}\right)^{2},
\]
so the expected fraction of constraints in $\Psi_{i,j}$ that are satisfied is at least
\[
\left(\frac{\eta}{16}\right)^2\left(\frac{\eta^3}{8\cdot 2^{10kC}}\right)^{2}=\Omega_{\eta}(1) > \eps
\]
for sufficiently small $\eps>0$. This completes the proof of Lemma~\ref{lem:soundness_hypergraph}.

\bibliographystyle{alpha}
\bibliography{references}
\appendix
\end{document}